\newcommand{\FOL}{{\mathsf{FOL}}}
\newcommand{\PL}{{\mathsf{PL}}}
\newcommand{\HDPL}{{\mathsf{HDPL}}}
\newcommand{\HDQL}{{\mathsf{HDQL}}}
\newcommand{\Mod}{\mathtt{Mod}}
\newcommand{\Sen}{\mathtt{Sen}}
\newcommand{\Prop}{\mathtt{Prop}}
\newcommand{\W}{\mathcal{W}}
\newcommand{\X}{\mathcal{X}}
\newcommand{\Y}{\mathcal{Y}}
\newcommand{\Hil}{\mathcal{H}}
\newcommand{\N}{\mathbb{N}}
\newcommand{\act}{\mathfrak{a}}
\newcommand{\bact}{\mathfrak{b}}
\newcommand{\cn}{\mathtt{c}}
\newcommand{\hil}{\mathtt{h}}
\newcommand{\vc}{\mathtt{v}}
\newcommand{\ut}{\mathtt{u}}
\newcommand{\pr}{\mathtt{pr}}
\newcommand{\w}{\mathtt{w}}
\newcommand{\red}{\!\upharpoonright\!}
\newcommand{\at}[1]{@_{#1}\,}
\newcommand{\nec}[1]{[#1]}
\newcommand{\pos}[1]{\langle #1 \rangle}
\newcommand{\store}[1]{{\downarrow}#1\,{\cdot}\,}
\newcommand{\Forall}[1]{\forall #1\,{\cdot}\,}
\newcommand{\Exists}[1]{\exists #1\,{\cdot}\,}
\newcommand{\ip}[2]{\langle #1  \mid #2\rangle}
\newcommand{\bbsemicolon}{%
  \scalerel*{%
    \hbox{\usefont{U}{bbold}{m}{n} ;}%
  }{;}%
}
\newcommand{\comp}{\mathbin{\bbsemicolon}}
\newcounter{nrf}
\newlength{\PS@lastparam}
\newlength{\PSlastparam}
\newcommand{\PSlp}{%
  \setlength{\PSlastparam}{\PS@lastparam}%
  \the\PSlastparam
}
\def\PS@sub@lastparam{}
\newcommand{\PS@numwidth}{99}
\newcommand{\PSnumwidth}[1]{%
  \renewcommand{\PS@numwidth}{#1}%
}
\newcommand{\PS@style}{\small}
\newcommand{\PS@numstyle}{\footnotesize}
\newlength{\PSindent}
\newlength{\PS@extraindent}
\newlength{\PSpre}
\newlength{\PSpost}
\newlength{\PS@Nwidth}
\newlength{\PS@Swidth}
\newlength{\PS@Ewidth}
\newlength{\PScolsep}
\newcommand{\PS@rownumber}{%
  \ifPS@subsubsteps
  \thePSsubstepc.%
  \the\numexpr\value{PSsubsubstepc}+1\relax
  \else
  \ifPS@substeps
  \thePSstepc.%
  \the\numexpr\value{PSsubstepc}+1\relax
  \else
  \the\numexpr\value{PSstepc}+1\relax
  \fi\fi
}
\newcommand{\PS@step}{%
  \ifPS@subsubsteps
  \refstepcounter{PSsubsubstepc}%
  \else
  \ifPS@substeps
  \refstepcounter{PSsubstepc}%
  \else
  \refstepcounter{PSstepc}
  \fi\fi%
}
\newif\ifPS@inprogress
\newif\ifPS@substeps
\newif\ifPS@subsubsteps
\newif\ifPS@continued
\newif\ifPS@subcontinued
\newcounter{PSc}
\newcounter{PSstepc}[PSc]
\newcounter{PSsubstepc}[PSstepc]
\renewcommand{\thePSsubstepc}{\thePSstepc.\arabic{PSsubstepc}}
\newcounter{PSsubsubstepc}[PSsubstepc]
\newenvironment{proofsteps}[1]{%
  \global\settowidth{\PS@lastparam}{\PS@style\hspace*{#1}}
  \ifPS@continued\else\refstepcounter{PSc}\fi
  \begingroup
  \setlength{\LTpre}{\PSpre}%
  \setlength{\LTpost}{\PSpost}%
  
  \setlength{\tabcolsep}{0pt}
  \noindent\PS@style
  \settowidth{\PS@Nwidth}{\PS@numstyle\PS@numwidth}%
  \setlength{\PS@Swidth}{#1}%
  \addtolength{\PS@Swidth}{-\PS@extraindent}%
  \setlength{\PS@Ewidth}{\linewidth}%
  \addtolength{\PS@Ewidth}{-\PSindent}%
  \addtolength{\PS@Ewidth}{-\PS@extraindent}%
  \addtolength{\PS@Ewidth}{-\PS@Nwidth}%
  \addtolength{\PS@Ewidth}{-\PScolsep}%
  \addtolength{\PS@Ewidth}{-\PS@Swidth}%
  \addtolength{\PS@Ewidth}{-\PScolsep}%
  \PS@inprogresstrue
  \longtable{%
    @{\hspace*{\PSindent}\hspace*{\PS@extraindent}\makebox[\PS@Nwidth][r]{\PS@rownumber}}%
    @{\hskip\PScolsep}>{\PS@step}p{\PS@Swidth}%
    @{\hskip\PScolsep}>{\footnotesize\raggedright\arraybackslash}p{\PS@Ewidth}%
  }%
}{%
  \ifPS@inprogress
  \addtocounter{table}{-1}%
  \endlongtable  
  \endgroup
  \PS@continuedfalse
  \PS@inprogressfalse
  \else\fi
}
\newcommand{\PSbreak}[1]{%
  \endproofsteps
  \par\medskip
  #1
  \medskip\par
  \PS@continuedtrue
  \proofsteps{\PS@lastparam}%
}
\newif\ifPS@sub@inprogress
\newif\ifPS@laststep
\newcommand{\laststep}{\global\PS@laststeptrue}
\newif\ifPS@lastsubstep
\newcommand{\lastsubstep}{\global\PS@lastsubsteptrue}
\newcommand{\adjustcol}[1]{%
  \global\advance\@colroom-#1%
}
\newcommand{\psqed}{%
  \vspace{-\baselineskip}\vspace{-1\smallskipamount}
}
\newcommand*{\pcformat}[1]{%
  [\;{\normalfont\itshape #1}\;]%
}
\newenvironment{proofcases}[1][]{%
  \description[font=\pcformat, leftmargin=\parindent, #1]%
}{\enddescription}
\begin{document}

\title{Foundations of logic programming in hybrid-dynamic quantum logic}

\author{Daniel G\u{a}in\u{a}}
\email{daniel@imi.kyushu-u.ac.jp}
\orcid{0000-0002-0978-2200}
\affiliation{%
  \institution{Kyushu University}
  \streetaddress{744 Motooka Nishi-ku}
  \city{Fukuoka}
  \country{Japan}
  \postcode{819-0395}
}
\renewcommand{\shortauthors}{D. G\u{a}in\u{a}}

\begin{abstract}
The main contribution of the present paper is the introduction of a simple yet expressive hybrid-dynamic logic for describing quantum programs.  
This version of quantum logic can express quantum measurements and unitary evolutions of states in a natural way based on concepts advanced in (hybrid and dynamic) modal logics.
We then study Horn clauses in the hybrid-dynamic quantum logic proposed, and develop a series of results that lead to an initial semantics theorem for sets of clauses that are satisfiable.
This shows that a significant fragment of hybrid-dynamic quantum logic has good computational properties, and can serve as a basis for defining executable languages for specifying quantum programs.
We set the foundations of logic programming in this fragment by proving a variant of Herbrand’s theorem, which reduces the semantic entailment of a logic-programming query by a program to the search of a suitable substitution.
\end{abstract}

\begin{CCSXML}
<ccs2012>
   <concept>
       <concept_id>10003752.10003790.10002990</concept_id>
       <concept_desc>Theory of computation~Logic and verification</concept_desc>
       <concept_significance>500</concept_significance>
       </concept>
   <concept>
       <concept_id>10003752.10003790.10003793</concept_id>
       <concept_desc>Theory of computation~Modal and temporal logics</concept_desc>
       <concept_significance>500</concept_significance>
       </concept>
   <concept>
       <concept_id>10003752.10003753.10003758.10010626</concept_id>
       <concept_desc>Theory of computation~Quantum information theory</concept_desc>
       <concept_significance>500</concept_significance>
       </concept>
 </ccs2012>
\end{CCSXML}

\ccsdesc[500]{Theory of computation~Logic and verification}
\ccsdesc[500]{Theory of computation~Modal and temporal logics}
\ccsdesc[500]{Theory of computation~Quantum information theory}
\keywords{Hilbert space, quantum logic, Horn clause, initiality, logic programming}


\maketitle

\section{Introduction}\label{sec:intro}
In the present paper, we propose a hybrid-dynamic quantum logic as foundation for quantum computing.
The present version of quantum logic is a modal logic enhanced with features from both hybrid and dynamic logics. 
Dynamic propositional logic  is suitable for describing and reasoning about classical computer programs, which means that
it is natural to develop a quantum version of dynamic propositional logic thus providing a basis for the formal verification of quantum programs~\cite{DBLP:journals/mscs/BaltagS06}.
On the other hand, the ability of naming individual states of hybrid logics has several advantages.
For example, hybrid logics allow a more uniform proof theory and model theory than non-hybrid modal logics \cite{brau11}.
The additional infrastructure geared towards distinguishing states and reasoning about their properties is crucial in applications to formal methods.
One can express temporal properties using the sentence building operators \emph{retrieve} and \emph{store} from hybrid logics~\cite{DBLP:journals/jsyml/ArecesBM01}.
As a result, hybrid logics were recognized as suitable for providing logical foundation for the reconfiguration paradigm~\cite{Madeira13,DBLP:journals/scp/MadeiraNBM16,DBLP:journals/jacm/Gaina20}.
There are at least two main challenges for designing a quantum logic in the modal logic tradition:
\begin{enumerate}
\item The set of states of a quantum program is given by a Hilbert space, which is a two sorted first-order model, since its elements are divided into scalars and vectors. 
The algebraic structure of Hilbert spaces, in general, and the equations on both scalars and vectors, in particular, play an important role in formal specification and verification of quantum systems. 
This suggests a two-layered approach to the design and analysis of quantum programs, involving
\emph{a local view}, which amounts to describing the structural algebraic  properties of quantum states, and
\emph{a global view}, which corresponds to a specialized language for reasoning about the way the quantum systems evolve.
\item Hilbert spaces are vectorial spaces equipped with an \emph{inner product}, in which each Cauchy sequence of vectors has a limit. Obviously, this is not a first-order property which makes the logical developments significantly more difficult.
We employ the method of diagrams proposed by Robinson in classical model theory to define Hilbert spaces.
We use constant symbols  to stand for the elements of the Hilbert space to be constructed, and we work within the theory which contains all the equations and relations satisfied by that Hilbert space.
In other words, the signatures of nominals are not single-sorted first-order signatures consisting only of constants and relations, but positive diagrams of concrete Hilbert spaces. 
In this way, we describe the structural properties of quantum states mentioned in the first item above.
\end{enumerate}
System dynamics  is modeled by labelled transition systems, that is, Kripke structures, in which \emph{actions} are represented as binary relations between states.
Both \emph{projective measurements} and \emph{unitary evolutions} are regarded as basic modal actions.
The unitary transformations on Hilbert spaces are also known as \emph{quantum gates}.
As argued in \cite{DBLP:journals/mscs/BaltagS06}, this approach allows one to express in a modal logic (based on a local Boolean satisfaction) quantum properties captured traditionally by (non-Boolean) Quantum Logic.
For example, the \emph{orthocomplement}, also called \emph{quantum negation},  $\sim\varphi$, is defined as the set of all vectors orthogonal on the vectors where $\varphi$ holds, while \emph{quantum disjunction} $\varphi_1\oplus \varphi_2$ expressing superpositions is  $\sim(\sim\varphi_1\wedge \sim\varphi_2)$.

In the logical setting described  above, we propose a notion of \emph{Horn clause} which allows one the use of orthocomplement, both quantum and classical implication, retrieve, store, universal quantifier and necessity over structural actions which, in turn, are constructed from \emph{projective measurements} and \emph{unitary transformations}.
Due to quantum negation, not all sets of Horn clauses are satisfiable.
Despite this fact the existence of initial models is guaranteed if there are no inconsistencies.
Therefore, the present results lead to a deeper understanding of initiality which is no longer connected to Horn clauses that are always satisfiable.

Initial semantics has received a lot of attention in formal methods, where it has been linked to formalisms that support execution by means of rewriting.
In algebraic specification, initiality supports the execution of specification languages by rewriting which, in turn, increases the automation of the formal verification process.
In logic programing, the initial models are called the least Herbrand models, and the initiality supports the execution of programs by resolution.
Our approach to initiality is layered and it is directly connected to the structure of sentences, in the style of \cite{DBLP:journals/logcom/GainaF15} developed for first-order logics, and \cite{DBLP:journals/tcs/Gaina17} developed for hybrid logics.
Firstly, we prove the existence of initial models for some \emph{basic sets of sentences} obtained from propositional symbols by applying conjunction, retrieve, store and necessity.
Then the initiality property is shown to be closed under the sentence building operators used to construct Horn clauses: 
both quantum and classical implications, 
orthocomplement, 
retrieve,
store,
universal quantification and 
necessity over actions constructed from both unitary transformations and projective measurements.
In this paper, we explore the role of initiality in logic programming and we prove a version of Herbrand's theorem.
This result establishes an equivalence between the semantic entailment of a query and the existence of an answer substitution for that query --- which is a syntactic construct, and can subsequently be obtained by proof-theoretic means.

The conceptual paper \cite{Baltag2011-BALQLA} advocates a new view on quantum logic, seen as a dynamic logic of some \emph{non-classical information flow}.
It is argued that the correct interpretation of quantum-logical connectives is in terms of dynamic modalities, rather than purely propositional operators.
This philosophical interpretation is supported by a series of technical developments reported in \cite{BaltagS2005,DBLP:journals/mscs/BaltagS06,DBLP:journals/sLogica/BaltagS08}.
The present approach is based on the same ideas, but it departs fundamentally from any of those studies due to the fact that the set of states is not the set of one-dimensional closed linear subspaces of some Hilbert but the entire set of vectors of a Hilbert space.
Our choice is motivated by practical applications to formal methods where the specification of quantum systems is easier in a logical setting which includes all the elements of the underlying Hilbert space.
The dynamic quantum logic proposed in \cite{DBLP:journals/mscs/BaltagS06} is equipped with features to capture properties of composite quantum systems.
We are planning to use the notion of signature morphism and define appropriate \emph{specification building operators}~\cite{DBLP:journals/iandc/SannellaT88} for constructing complex quantum systems from simple and easy-to-verify ones according to the ideas advanced in algebraic specification~\cite{DBLP:series/wsscs/SannellaT93}.

\section{Hilbert spaces} \label{sec:hilbert}
The present section contains a short presentation of Hilbert spaces and some of their properties from a first-order logic ($\FOL$) perspective.
The signature of Hilbert spaces $\Sigma^\hil=(S^\hil,F^\hil,P^\hil)$ is a first-order signature consisting of:
\begin{enumerate}
\item a set of sorts $S^\hil=\{\cn,\vc\}$, where $\cn$ stands for the sort of complex numbers and 
$\vc$ for the sorts of vectors.

\item a set of function symbols $F^\hil=F^\cn\cup F^\vc$, where 
\begin{enumerate}
\item $F^\cn$ consists of the usual operations on complex numbers such as 
addition $\_+\_:\cn~\cn\to\cn$ and its identity element $0:\to\cn$,
multiplication $\_*\_:\cn\to\cn$, etc;
\item $F^\vc=\{ 
\_+\_:\vc~\vc\to\vc, 
0:\to \vc, 
\_\_ : \cn~\vc\to \vc, 
\ip{\_}{\_}:\vc~\vc\to\cn \}$, where 
$\_+\_:\vc~\vc\to\vc$ stands for the vector addition,
$0$ stands for the origin vector, 
the juxtaposition $\_\_ : \cn~\vc\to \vc$ stands for the scalar multiplication, and
$\ip{\_}{\_}:\vc~\vc\to\cn$ stands for the inner product;
\end{enumerate}
\item $P=\{\_<\_:\cn~\cn\}$, where $\_<\_:\cn~\cn$ stands for the ordering among real numbers.
\end{enumerate}

A Hilbert space is a first-order model $\Hil$ over the signature $\Sigma^\hil$ such that:
\begin{enumerate}
\item $\Hil_\cn=\mathbb{C}$, the set of complex numbers;

\item $\Hil$ interprets all function symbols in $F^\cn$ as the usual operations on complex numbers;

\item $\Hil_\vc$ is a set of vectors,
$+^\Hil:\Hil_\vc\times \Hil_\vc\to \Hil_\vc$ is the vector addition, 
${\_\_}^\Hil: \Hil_\cn\times \Hil_\vc\to \Hil_\vc$ is the scalar multiplication,
$\ip{\_}{\_}^\Hil:\Hil_\vc\times \Hil_\vc\to \Hil_\cn$ is the inner product in which each Cauchy sequence of vectors has a limit; 
\footnote{Recall that a sequence of vectors $\{w_n\}_{n\in \N}$ is a Cauchy sequence if for any $\varepsilon >0$  there exists $n\in\N$ such that $||w_i-w_j||<\varepsilon$ for all $i,j\geq n$, where $||w||=\sqrt{\ip{w}{w}}$ is the length/norm of a vector $w\in \Hil_\vc$.}
\item $<^\Hil$ is the usual total (strict) ordering on $\mathbb{R}$.
\end{enumerate}

Two vectors $w_1,w_2\in \Hil_\vc$ are orthogonal, in symbols, $w_1\perp w_2$ if $\ip{w_1}{w_2}^\Hil=0^\Hil$.
This notational convention can be extended to sets of vectors in the usual way.
A closed subspace $\X$ of a Hilbert space $\Hil$ is a (first-order) substructure $\X\subseteq \Hil$ such that
\begin{enumerate*}[label=(\alph*)]
\item $\X_\cn=\mathbb{C}$ and 
\item each Cauchy sequence of vectors  from $\X$ has a limit in $\X$. 
\end{enumerate*}
In other words, a closed subspace is in particular a Hilbert space.
A closed subspace is usually identified by its set of vectors --- 
a convention which is adopted in this paper.
The orthocomplement of $\X$ denoted $\X^\perp$ is defined by $\X^\perp=\{w\in \Hil_\vc \mid w\perp \X \}$.
The orthocomplement of a set of vectors is a closed subspace. 
The direct sum of two closed subspaces $\X$ and $\Y$ is defined as $\X\oplus \Y\coloneqq \{x +^\Hil y \mid x\in \X \text{ and } y\in\Y \}$.
The following results are well-known.
\begin{theorem} \label{th:hilbert}
Let $\Hil$ be a Hilbert space.
\begin{enumerate}
\item \label{th:hilbert-1} A substructure $\X\subseteq \Hil$ is a closed subspace iff $\X^{\perp\perp}=\X$.
\item \label{th:hilbert-2} $\X\oplus \Y= (\X^\perp \cap \Y^\perp)^\perp$ for all closed subspaces $\X\subseteq \Hil$ and $\Y\subseteq \Hil$.
\item \label{th:hilbert-3} $\X^{\perp\perp}=\Y\cap (\Y\cap \X^\perp)^\perp$ for all closed subspaces  $\X\subseteq \Y\subseteq \Hil$, which means that the (global) closure of $\X$ is equal to local closure of $\X$ relative to $\Y$.
\item $\Hil=\X\oplus \X^\perp$ for all closed subspaces $\X\subseteq \Hil$.
\end{enumerate}
\end{theorem}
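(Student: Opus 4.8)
The whole theorem rests on the Hilbert Projection Theorem, which is essentially item (4); so the plan is to establish (4) first and then obtain (1)--(3) as formal consequences of it together with the already-recorded fact that the orthocomplement of any set of vectors is a closed subspace. For (4), I would fix a closed subspace $\X\subseteq\Hil$ and a vector $w\in\Hil_\vc$, and produce the orthogonal projection of $w$ onto $\X$ as the unique nearest point. Setting $d=\inf_{x\in\X}\lVert w-x\rVert$ and choosing a minimizing sequence $\{x_n\}$, the parallelogram law gives $\lVert x_i-x_j\rVert^2 = 2\lVert w-x_i\rVert^2 + 2\lVert w-x_j\rVert^2 - 4\lVert w-\tfrac{x_i+x_j}{2}\rVert^2$, and since $\tfrac{x_i+x_j}{2}\in\X$ the last term is $\geq 4d^2$; hence $\{x_n\}$ is Cauchy. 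Completeness of $\Hil$ together with closedness of $\X$ yields a limit $x_0\in\X$ with $\lVert w-x_0\rVert=d$, and expanding $0\le\lVert(w-x_0)-tx\rVert^2-d^2$ for all scalars $t\in\cn$ and all $x\in\X$ forces $w-x_0\perp\X$, i.e. $w-x_0\in\X^\perp$. Thus $w=x_0+(w-x_0)$, so $\Hil=\X\oplus\X^\perp$; directness follows from $\X\cap\X^\perp=\{0\}$, since any vector in the intersection is orthogonal to itself.

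Given (4), item (1) is quick. The inclusion $\X\subseteq\X^{\perp\perp}$ holds for any set, and $\X^{\perp\perp}=(\X^\perp)^\perp$ is a closed subspace by the recorded property of orthocomplements; this already settles the direction ``$\X^{\perp\perp}=\X \Rightarrow \X$ closed''. Conversely, if $\X$ is closed, take $w\in\X^{\perp\perp}$ and decompose $w=x+y$ with $x\in\X$, $y\in\X^\perp$ via (4). Then $y=w-x\in\X^{\perp\perp}$ (as $w,x\in\X^{\perp\perp}$), while also $y\in\X^\perp$; hence $y\perp y$, forcing $y=0$ and $w=x\in\X$, which gives $\X^{\perp\perp}\subseteq\X$ and thus equality.

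Items (2) and (3) are then manipulations of $\perp$ anchored by (1) and (4). For (2), I would first record the elementary identity $(\X\oplus\Y)^\perp=\X^\perp\cap\Y^\perp$ (a vector is orthogonal to every $x+y$ iff it is orthogonal to each summand, obtained by taking $y=0$ and $x=0$), so that $(\X^\perp\cap\Y^\perp)^\perp=(\X\oplus\Y)^{\perp\perp}$, which by (1) is the closure of the subspace $\X\oplus\Y$. For (3), with $\X\subseteq\Y\subseteq\Hil$, I note $\X^{\perp\perp}\subseteq\Y^{\perp\perp}=\Y$ and $\X^{\perp\perp}\subseteq(\Y\cap\X^\perp)^\perp$ (since $\Y\cap\X^\perp\subseteq\X^\perp$), giving the inclusion $\subseteq$; for the reverse, I would take $w\in\Y\cap(\Y\cap\X^\perp)^\perp$ and an arbitrary $u\in\X^\perp$, split $u=u_\Y+u'$ along $\Hil=\Y\oplus\Y^\perp$ from (4), verify $u_\Y\in\Y\cap\X^\perp$, and conclude $\ip{w}{u}=\ip{w}{u_\Y}=0$ (using $w\in\Y\perp u'$), so that $w\in(\X^\perp)^\perp=\X^{\perp\perp}$.

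The hard part will be the projection argument in (4): it is the only genuinely analytic step, and it relies on completeness --- the non--first-order Cauchy condition --- through the parallelogram-law estimate. A secondary point of care arises in (2): the clean identity only delivers $(\X^\perp\cap\Y^\perp)^\perp=\overline{\X\oplus\Y}$, so to match the stated right-hand side with the algebraic sum $\X\oplus\Y$ one must know that the sum of the two closed subspaces is already closed (automatic when $\Hil$ is finite-dimensional, or when one summand is, as is typical for the subspaces arising from projective measurements). I would therefore discharge this closedness before identifying the two sides.
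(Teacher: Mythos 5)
The paper itself offers no proof of this theorem: it is presented as a list of well-known facts about Hilbert spaces, so your argument fills a gap rather than parallels an existing one. Your route is the standard one and is essentially sound. Item (4) via a minimizing sequence, the parallelogram law, completeness of $\Hil$ and closedness of $\X$ is correct, and it is indeed the only genuinely analytic step; items (1) and (3) then follow by the orthocomplement manipulations you describe. In (3), the verification that $u_\Y\in\Y\cap\X^\perp$ is the one line worth spelling out: it holds because $u'\in\Y^\perp\subseteq\X^\perp$ (using $\X\subseteq\Y$), so $u_\Y=u-u'\in\X^\perp$, after which both inner products vanish as you say.

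Your caveat about item (2) deserves emphasis, because it is not a weakness of your proof but a genuine imprecision in the statement itself. With $\oplus$ defined in Section~\ref{sec:hilbert} as the algebraic sum $\X\oplus\Y=\{x+y\mid x\in\X,\ y\in\Y\}$, the identity $\X\oplus\Y=(\X^\perp\cap\Y^\perp)^\perp$ is false for general closed subspaces of an infinite-dimensional Hilbert space: the right-hand side is always closed, whereas the sum of two closed subspaces need not be (take $\X=\{(x,0)\mid x\in\ell^2\}$ and $\Y=\{(x,Tx)\mid x\in\ell^2\}$ in $\ell^2\oplus\ell^2$, with $T$ bounded, injective, with dense non-closed range; then $\X\oplus\Y=\ell^2\times\mathrm{ran}(T)$ is not closed, while $\X^\perp\cap\Y^\perp=\{0\}$). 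So the closedness you propose to ``discharge'' cannot be discharged in general; the correct well-known identity is $(\X^\perp\cap\Y^\perp)^\perp=\overline{\X\oplus\Y}$, and item (2) should either be restated with the closure or restricted to pairs whose sum is closed. This does not propagate into the rest of the paper: the only sums actually used later, namely $\X\oplus\X^\perp$ in item (4) and $\X^\perp\oplus(\X\cap\Y)$ in the Sasaki hook, are sums of mutually orthogonal closed subspaces, and such sums are always closed.
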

By the theorem above, for any closed subspace $\X\subseteq \Hil$ and vector $w\in \Hil$, we have $w=w_1 +^\Hil w_2$, where $w_1\in \X$ and $w_2\in \X^\perp$.
Since $\X\cap \X^\perp=\{0^\Hil\}$, the vectors $w_1\in \X$ and $w_2\in \X^\perp$ are uniquely determined.
The projection $P_\X:\Hil\to \X$ is defined by $P_\X(w)=w_1$, where $w=w_1+w_2$, $w_1\in \X$ and $w_2\in \X^\perp$.
\begin{theorem} \label{th:projection}
For all closed subspaces $X$ and $Y$ of a Hilbert space $\Hil$, we have
\begin{enumerate}
\item $P_\X(\Y)= \X\cap(\X^\perp\oplus \Y)$, and
\item $P_\X^{-1}(\Y) =\X^\perp\oplus(\X\cap \Y)$.
\end{enumerate}
\end{theorem}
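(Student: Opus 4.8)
The plan is to prove each set identity by double inclusion, reducing everything to the unique orthogonal decomposition of vectors guaranteed by the last item of Theorem~\ref{th:hilbert}: every $w \in \Hil_\vc$ can be written uniquely as $w = w_1 +^\Hil w_2$ with $w_1 \in \X$ and $w_2 \in \X^\perp$, and by definition $P_\X(w) = w_1$. All four inclusions then follow by rearranging such decompositions and invoking the uniqueness of the split; no deeper structural result (such as Theorem~\ref{th:hilbert}\ref{th:hilbert-2} or \ref{th:hilbert-3}) is needed, since the sums $\X^\perp \oplus \Y$ and $\X^\perp \oplus (\X \cap \Y)$ enter only as the plain sums of vectors fixed by the definition of $\oplus$.

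For the first identity, suppose $v \in P_\X(\Y)$, say $v = P_\X(y)$ for some $y \in \Y$ with decomposition $y = y_1 +^\Hil y_2$, where $y_1 \in \X$ and $y_2 \in \X^\perp$. Then $v = y_1 \in \X$, and writing $y_1 = y +^\Hil (-y_2)$ exhibits $v$ as a sum of a vector in $\Y$ and a vector in $\X^\perp$, so $v \in \X \cap (\X^\perp \oplus \Y)$. Conversely, if $v \in \X$ and $v = u +^\Hil y$ with $u \in \X^\perp$ and $y \in \Y$, then $y = v +^\Hil (-u)$ is precisely the orthogonal decomposition of $y$, because $v \in \X$ and $-u \in \X^\perp$; hence $P_\X(y) = v$ by uniqueness, and thus $v \in P_\X(\Y)$.

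For the second identity, the one point to record first is that $P_\X$ takes values in $\X$, so the condition $P_\X(w) \in \Y$ is equivalent to $P_\X(w) \in \X \cap \Y$; note also that $\X \cap \Y$ is again a closed subspace, so the right-hand side is a legitimate instance of the direct-sum notation. If $w \in P_\X^{-1}(\Y)$ with $w = w_1 +^\Hil w_2$, $w_1 \in \X$, $w_2 \in \X^\perp$, then $w_1 = P_\X(w) \in \X \cap \Y$, whence $w = w_2 +^\Hil w_1 \in \X^\perp \oplus (\X \cap \Y)$. Conversely, if $w = u +^\Hil z$ with $u \in \X^\perp$ and $z \in \X \cap \Y$, then this is exactly the orthogonal decomposition of $w$, so $P_\X(w) = z \in \Y$ and therefore $w \in P_\X^{-1}(\Y)$.

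I do not expect a genuine obstacle here: each direction is a one-line rearrangement, and the only subtlety is to invoke uniqueness of the decomposition at precisely the points where a sum $v +^\Hil (-u)$ or $u +^\Hil z$ is claimed to \emph{be} the orthogonal split, and to make sure the side condition $v \in \X$ (respectively $z \in \X$) is what licenses reading off the value of $P_\X$ directly. The potentially delicate analytic issue --- whether $\X^\perp \oplus \Y$ is itself closed --- never intervenes, since the statement only compares these sums with $P_\X(\Y)$ and $P_\X^{-1}(\Y)$ as sets of vectors.
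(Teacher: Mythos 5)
Your proof is correct. A point of comparison worth making explicit: the paper never proves Theorem~\ref{th:projection} at all --- it is stated without proof, as one of the facts about Hilbert spaces the paper treats as well known --- so there is no in-paper argument to measure your proposal against, and your double-inclusion argument fills that gap. Each of the four inclusions is licensed exactly as you say by the unique orthogonal decomposition $w = w_1 +^\Hil w_2$ with $w_1 \in \X$, $w_2 \in \X^\perp$ (the last item of Theorem~\ref{th:hilbert} together with the uniqueness observation the paper makes when defining $P_\X$), and the two places where you claim a rearranged sum \emph{is} the orthogonal split (namely $y = v +^\Hil (-u)$ and $w = u +^\Hil z$) are correctly justified by membership of the summands in $\X$ and $\X^\perp$; closure of $\X^\perp$ under negation is immediate since orthocomplements are (closed) subspaces. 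Your two side remarks are also exactly the right ones: because the paper defines $\X \oplus \Y$ as the plain set of sums $\{x +^\Hil y \mid x \in \X,\ y \in \Y\}$ rather than as a closed join, both identities hold literally as set identities and the possible non-closedness of $\X^\perp \oplus \Y$ never intervenes; and $\X \cap \Y$ is a closed subspace, so the right-hand side of the second identity is a legitimate instance of the $\oplus$ notation. Finally, your argument is pleasantly minimal: it invokes neither Theorem~\ref{th:hilbert}~(\ref{th:hilbert-2}) nor Theorem~\ref{th:hilbert}~(\ref{th:hilbert-3}) (the latter is what the paper leans on elsewhere, e.g.\ in Lemma~\ref{lemma:hook}), only the decomposition itself.
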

We denote $\X^\perp\oplus(\X\cap \Y)$ by $\X\leadsto \Y$. 
This defines a binary operation on closed subspaces called ``Sasaki hook''.
\section{Hybrid-Dynamic Quantum Logic}
We define Hybrid Dynamic Quantum Logic ($\HDQL$)    as an extension of Hybrid-Dynamic Propositional Logic ($\HDPL$) with some constraints on the possible world semantics.

\subsection{Signatures}
The signatures of $\HDQL$ are of the form $\Delta=(\Sigma,E,\Prop)$, where:
\begin{enumerate}
\item $\Sigma$ is a first-order signature obtained from the signature of Hilbert spaces $\Sigma^\hil=(S^\hil,F^\hil,P^\hil)$ by adding 
\begin{enumerate*}[label=(\alph*)]
\item a set $U$ of unitary transformation symbols, which are functions symbols of the form $\ut:\vc\to\vc$;
\item a set $Q$ of quantum measurement symbols, which are function symbols of the form $q:\vc\to\vc$;
\item a set $D$ of constants of sort $\vc$; and 
\item a set $C$ of constants of sort $\cn$;
\end{enumerate*}

\item $E$ is a set of first-order sentences over $\Sigma$;

\item $\Prop$ is a set of propositional symbols which contains a subset $\Prop^c$ of closed propositional symbols.
\end{enumerate}
Notice that the signature of Hilbert spaces is contained in every $\HDQL$ signature.
In applications, the set $E$ of first-order sentences over the  signature $\Sigma$  will play an important role in defining logical properties of quantum models. 
Therefore, we decided to introduce them at the level of $\HDQL$ signatures, since in applications, the set of first-order sentences that describes Hilbert spaces is defined before the specification of a quantum protocol.
A concrete example of $E$ is the positive Robinson diagram of some Hilbert space, that is, all the equations and relations satisfied by some Hilbert space.
We let $\Delta$  range over signatures of the form $(\Sigma,E,\Prop)$ as described above.
Similarly, for any index $i$, we let $\Delta_i$  range over signatures of the form $(\Sigma_i,E_i,\Prop_i)$, where $\Sigma_i=(S^\hil,F^\hil\cup U_i \cup Q_i \cup  D_i \cup C_i,P^\hil)$.
Signature morphisms $\chi:\Delta_1\to \Delta_2$ consists of:
\begin{enumerate}
\item a first-order theory morphism $\chi:(\Sigma_1,E_1)\to (\Sigma_2,E_2)$, which is the identity on $\Sigma^\hil$ and preserves:
\begin{enumerate*}[label=(\alph*)]
\item unitary transformations, $\chi(U_1)\subseteq U_2$,
\item projective measurements, $\chi(Q_1)\subseteq Q_2$, 
\item vector constants, $\chi(D_1)\subseteq D_2$, and
\item scalar constants, $\chi(C_1)\subseteq C_2$;
\end{enumerate*}
\footnote{Since $\chi:(\Sigma_1,E_1)\to(\Sigma_2,E_2)$ is a first-order theory morphism, $E_2$ satisfies $\chi(E_1)$, in symbols, $E_2\models^\FOL\chi(E_1)$.}
and
\item a mapping $\chi:\Prop_1\to\Prop_2$ on propositional symbols such that both closed and non-closed  propositional symbols are preserved, that is, $\chi(\Prop_1^c)\subseteq \Prop_2^c$ and $\chi(\Prop_1\setminus \Prop_1^c) \subseteq \Prop_2\setminus \Prop_2^c$.
\end{enumerate}
We overloaded the notation such that $\chi$ denotes not only the signature morphism from $\Delta_1$ to $\Delta_2$ but also its restrictions to $\Sigma_1$ and $\Prop_1$.
\subsection{Models}
A \emph{quantum model} over a signature $\Delta$ is a Kripke structure $(W,M)$ such that:
\begin{enumerate}
\item $W$ is a first-order structure for the theory $(\Sigma,E)$ such that 
\begin{enumerate}
\item $W\red_{\Sigma^\hil}$ is a Hilbert space,
\item for all $\ut:\vc\to\vc\in U$ the function $\ut^W:W_\vc\to W_\vc$ is a unitary transformation, that is, 
$\ut^W$ is a bounded linear operation which has an adjoint $(\ut^\dagger)^W$ that is its inverse, $\ut^W ; (\ut^\dagger)^W = (\ut^\dagger)^W ; \ut^W = 1_\W$, and
\item for all $q:\vc\to\vc\in Q$ the function $q^W:W_\vc\to W_\vc$ is a quantum measurement, that is, there exists a closed subspace $\X\subseteq W_\vc$ such that $q^W(w)= P_\X(w)/\sqrt{\pr_\X(w)}$, 
where 
\begin{enumerate*}
\item $P_\X:W_\vc\to \X$ is the projection on $\X$, and
\item $\pr_\X: W_\vc\to W_\cn$ is the probability function defined by $\pr_\X(w)=\ip{w}{P_\X(w)}$ for all $w\in W_\vc$.
\end{enumerate*}
\end{enumerate}

\item $M:W_\vc\to |\Mod^\PL(\Prop)|$ is a mapping from the set of vectors $W_\vc$ to the class of propositional models over $\Prop$ such that $r^{(W,M)}\coloneqq \{w\in W_\vc \mid r\in M_w\}$ is a closed subspace for all closed propositional symbols $r\in \Prop^c$.
\footnote{Notice that $M_w$ is the propositional model at the state $w$, in other words, $M$ applied to $w$.}
\end{enumerate}
A $\HDQL$ homomorphism $h:(W,M)\to(V,N)$ is a first-order homomorphism $h:W\to V$ such that 
\begin{enumerate*}[label=(\alph*)]
\item $h_\cn:W_\cn\to V_\cn$ is the identity, and 
\item $M_w\subseteq N_{h(w)}$ for all vectors $w\in V_\vc$.
\end{enumerate*}
\begin{lemma} \label{lemma:inj}
All homomorphisms of Hilbert spaces are injective. 
In particular, all homomorphisms of quantum models are injective.
\end{lemma}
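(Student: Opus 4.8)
The plan is to verify injectivity separately on the two sorts $\cn$ and $\vc$, since a first-order $\Sigma^\hil$-homomorphism $h\colon \Hil\to \Hil'$ of Hilbert spaces is injective exactly when both components $h_\cn\colon \Hil_\cn\to \Hil'_\cn$ and $h_\vc\colon \Hil_\vc\to \Hil'_\vc$ are injective. The scalar component $h_\cn$ is a homomorphism of the field $\mathbb{C}=\Hil_\cn$ that preserves the strict ordering $<$; from $0<1$ we get $h_\cn(0)<h_\cn(1)$, so $h_\cn(1)\neq 0$, whence $h_\cn$ is not the zero map and, since $h_\cn(1)=h_\cn(1)^2$, in fact $h_\cn(1)=1$. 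Every non-zero ring homomorphism out of a field has trivial kernel, so $h_\cn$ is injective. This part is routine field theory.

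The heart of the argument is the vector sort, where I would exploit the inner product together with its positive-definiteness --- a property built into the model class rather than guaranteed by the first-order axioms. Recall that $h$ preserves every operation in $F^\vc$: the component $h_\vc$ is additive and compatible with scalar multiplication (hence $\mathbb{C}$-linear up to the field map $h_\cn$, using $h_\cn(1)=1$ and therefore $h_\cn(-1)=-1$), and it preserves the inner product, i.e.\ $\ip{h_\vc(u)}{h_\vc(v)}'=h_\cn(\ip{u}{v})$ for all $u,v\in \Hil_\vc$, where $\ip{\_}{\_}'$ denotes the inner product computed in $\Hil'$. Suppose $h_\vc(w_1)=h_\vc(w_2)$ and set $w\coloneqq w_1+(-1)\,w_2\in \Hil_\vc$; then by linearity $h_\vc(w)=0$. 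Consequently
\[
h_\cn\bigl(\ip{w}{w}\bigr)=\ip{h_\vc(w)}{h_\vc(w)}'=\ip{0}{0}'=0 .
\]
If $w\neq 0$, positive-definiteness of the inner product of $\Hil$ gives $\ip{w}{w}>0$, that is $0<\ip{w}{w}$ in $\Hil_\cn$; applying the order-preserving map $h_\cn$ then yields $0=h_\cn(0)<h_\cn(\ip{w}{w})=0$, a contradiction. Hence $w=0$, so $w_1=w_2$ and $h_\vc$ is injective. Together with the scalar case, $h$ is injective.

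For the second assertion, let $h\colon (W,M)\to(V,N)$ be a homomorphism of quantum models over a signature $\Delta$. By definition it is a first-order homomorphism $h\colon W\to V$, so its reduct $h\red_{\Sigma^\hil}\colon W\red_{\Sigma^\hil}\to V\red_{\Sigma^\hil}$ is a homomorphism between the Hilbert spaces $W\red_{\Sigma^\hil}$ and $V\red_{\Sigma^\hil}$ (and here $h_\cn$ is even the identity, so the scalar case is immediate). Since injectivity depends only on the underlying maps on the sorts $\cn$ and $\vc$, the first part applies verbatim and shows that $h$ is injective.

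The step I expect to be the main obstacle --- or at least the one demanding care rather than bookkeeping --- is the vector case: one must recognise that injectivity cannot be extracted from the equational, first-order content of $\Sigma^\hil$ alone, and that the decisive ingredient is the analytic property of positive-definiteness, available precisely because the $\Sigma^\hil$-reduct is stipulated to be a genuine Hilbert space. The only side conditions needing attention are that homomorphisms of the relational structure really do transport $<$ (so that $0<\ip{w}{w}$ survives) and that $h_\cn(1)=1$ (so that $(-1)\,w_2$ is carried to the additive inverse of $h_\vc(w_2)$), both of which follow directly from the definition of a first-order homomorphism.
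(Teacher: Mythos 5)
Your proof is correct, and its core coincides with the paper's: both reduce injectivity to triviality of the kernel on the vector sort via the inner product ($h_\vc(w)=0$ forces $h_\cn(\ip{w}{w})=0$, which forces $\ip{w}{w}=0$, i.e.\ $w=0$), followed by linearity to pass from trivial kernel to injectivity. The genuine difference is in how the scalar sort is handled. The paper's proof takes $h$ to be a homomorphism of \emph{quantum models}, so $h_\cn$ is the identity on $\mathbb{C}$ by definition, and the step $h(\ip{w}{w})=0 \Rightarrow \ip{w}{w}=0$ is immediate; strictly speaking, that argument only establishes the ``in particular'' half of the lemma (or tacitly assumes Hilbert-space homomorphisms fix the scalars). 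You instead derive injectivity of $h_\cn$ from first principles: preservation of the relation $<$ gives $h_\cn(1)\neq 0$, hence $h_\cn(1)=1$ and trivial kernel by field theory, and then positive-definiteness plus order preservation ($0<\ip{w}{w}$ would be mapped to $0<0$) closes the vector case. This buys the first assertion in full generality: it covers Hilbert-space homomorphisms whose scalar component is \emph{not} the identity --- e.g.\ the anti-linear map induced by complex conjugation, which is a legitimate first-order $\Sigma^\hil$-homomorphism preserving $<$ --- whereas the paper's shorter argument does not apply to them. Your observation that the decisive ingredients (positive-definiteness, and the order relation being part of the signature) are semantic constraints on the model class rather than consequences of the first-order syntax is exactly right, and your care in not assuming that $1$ occurs as a constant symbol (deriving $h_\cn(1)=1$ from $h_\cn(1)=h_\cn(1)^2$ and the order) is a sound precaution given that the paper leaves $F^\cn$ only loosely specified.
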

\begin{proof}
Let $h:(W,M)\to(V,N)$ be a homomorphism of quantum models.
For all vectors $w\in W_\vc$, we have
$h(w)=0^V$ iff (by inner product definition)
$\ip{h(w)}{h(w)}^V= 0$ (since $h$ is a first-order homomorphism) iff 
$h(\ip{w}{w}^W)=0$ iff ($h$ is the identity on complex numbers) 
$\ip{w}{w}^W=0$ iff (by the definition of the inner product)
$w= 0^W$.
Then for all vectors $w_1,w_2\in W_\vc$, 
we have  
$h(w_1)=h(w_2)$ iff 
$h(w_1) -^V h(w_2)=0^V$ iff 
$h(w_1-^W w_2)=0^V$ iff
$w_1-^W w_2=0^W$ iff
$w_1=w_2$.
Hence, $h$ is injective.
\end{proof}

The class of quantum models over a signature $\Delta$ together with their homomorphisms forms a category denoted $\Mod(\Delta)$.

For any set of vector variables $X$ for a signature $\Delta=(\Sigma,E,\Prop)$, 
we denote by $\Delta[X]$ the signature $(\Sigma[X],E,\Prop)$, where $\Sigma[X]$ is the first-order signature obtained from $\Sigma$ by adding all variables in $X$ as new constants.
For any quantum model $(W,M)$ over $\Delta$, 
by a $\Delta[X]$-\emph{expansion} of $(W, M)$ we understand a quantum model $(W',M)$ over $\Delta[X]$ that interprets all symbols in $\Delta$ in the same way as $(W, M)$. 
In other words, $(W,M)$ is a \emph{reduct} of $(W',M)$ to the signature $\Delta$, in symbols, $(W',M)\red_\Delta=(W,M)$.
This means that $W$ is the reduct of $W'$ to the first-order signature $\Sigma$, in symbols, $W'\red_\Sigma=W$.  
 
\subsection{Sentences}
The set of \emph{actions} over a signature $\Delta$ is defined by the following grammar:
\begin{center}
$\act\Coloneqq \ut \mid q \mid \act \comp \act \mid \act \cup \act \mid \act^{*}$
\end{center}
where $\ut:\vc\to\vc\in U$ is a unitary transformation symbol, and $q:\vc\to\vc\in Q$ is a projective measurement symbol.
The set of \emph{sentences} over $\Delta$, denoted $\Sen(\Delta)$, is defined by the following grammar: 
\begin{center}
$\gamma \Coloneqq
p \mid
\at{k}\gamma \mid
\gamma \wedge \gamma \mid
\neg\gamma\mid
\sim \gamma \mid
\nec{\act}\gamma \mid
\store{z}\gamma_1\mid
\Forall{X}\gamma_2$
\end{center}
where $p\in\Prop$ is any propositional symbol,
$k\in T_{\Sigma,\vc}$ is a term of sort $\vc$,
$\act$ is an action,
$z$ is a vector variable for $\Delta$,
$\gamma_1$ is a sentence over $\Delta[z]$,
$X$ is a set of vector variables for $\Delta$, and
$\gamma_2$ is a sentence over $\Delta[X]$.
We refer to the sentence-building operators, in order, as 
\emph{retrieve},
\emph{conjunction},
\emph{negation},
\emph{quantum negation},
\emph{necessity}, 
\emph{store}, and
\emph{universal quantification},
respectively.

Each signature morphism $\chi:\Delta_1\to \Delta_1$ induces a sentence translation $\chi:\Sen(\Delta_1) \to\Sen(\Delta_2)$ that replaces, in an inductive manner, in any sentence $\gamma\in\Sen(\Delta_1)$ the symbols from $\Delta_1$ with symbols from $\Delta_2$ according to $\chi:\Delta_1\to\Delta_2$. 

\subsection{Local satisfaction relation}
Let $(W,M)$ be a quantum model over a signature $\Delta$.

The semantics of actions is defined as follows:
\begin{itemize}
\item $(\act_1\comp\act_2)^W=\act_1^W\comp \act_2^W$;
\item $(\act_1\cup\act_2)^W=\act_1^W\cup \act_2^W$;
\item $(\act^*)^W=\bigcup_{n\in \N} (\act^n)^W$,
where $\act^0$ denotes the identity, and $\act^{n+1}=\act\comp \act^n$ for all $n\in \N$.
\end{itemize}

The semantics of sentences is defined as follows:
\begin{itemize}
\item $p^{(W,M)}=\{w\in W_\vc\mid p\in M_w\}$ for all propositional symbols $p$ in $\Delta$;
\item $(\at{k}\gamma)^{(W,M)}=
\left\{\begin{array}{l l}
W_\vc & \text{if } k^W\in \gamma^{(W,M)}, \\
\emptyset  & \text{if } k^W\not\in \gamma^{(W,M)};
\end{array}\right.$
\item $(\gamma_1\wedge\gamma_2)^{(W,M)}= \gamma_1^{(W,M)} \cap \gamma_2^{(W,M)}$;
\item $(\neg\gamma)^{(W,M)}=W_\vc \setminus \gamma^{(W,M)}$;
\item $(\sim\gamma)^{(W,M)}= (\gamma^{(W,M)})^\perp$;
\item $(\nec{\act}\gamma)^{(W,M)}= \{w\in W_\vc\mid \act^W(w)\subseteq \gamma^{(W,M)}\}$,

where $\act^W(w)$ denotes the set of vectors $\{v\in W_\vc\mid (w,v)\in \act^W\}$;
\item $(\store{z}\gamma_1)^{(W,M)}=\{w\in W_\vc \mid w\in \gamma_1^{(W^{z\leftarrow w},M)} \}$,

where $(W^{z\leftarrow w},M)$ is the unique $\Delta[z]$-expansion of $(W,M)$ which interprets $z$ as $w$;
\item $(\Forall{X}\gamma_2)^{(W,M)}=\displaystyle\bigcap_{W_2\red_\Sigma=W} \gamma_2^{(W_2,M)}$.
\end{itemize}

We say that $(W,M)$ \emph{satisfies $\gamma$ in the state $w$}, in symbols, $(W,M)\models^w \gamma$, if $w\in \gamma^{(W,M)}$.

The logic defined above provides considerable expressive power over quantum models.
For example, the \emph{Until} operator can be defined using the hybrid logic operators retrieve and store:
$$Until(\gamma,\psi)=\store{x}\pos{\act} \store{y} \gamma \wedge \at{x} (\nec{\act}(\pos{\act} y \Rightarrow \psi))$$
The current state is named $x$ and 
then $\pos{\act}$ is used to move to an accessible state, which is named $y$.
The first argument of conjunction is $\gamma$, which must hold in the state denoted by $y$.
The second argument of conjunction sets the current state to $x$ by applying $\at{x}$ and then it says that $\psi$ holds in any state which succeeds  $x$ and preceds $y$.
Therefore, one can express temporal properties in the logical framework proposed in this paper.

Another approach to quantum logic that integrates both dynamic and temporal aspects of quantum systems is \cite{takagi2023};
the unitary transformations are regarded  as temporal operators, but the until operator is not discussed for the sake of simplicity. 
The author also claims that none of the existing approaches can deal with both dynamic and temporal aspects of quantum systems.
In this paper, we have provided an alternative in which one can express temporal properties with ease.

\subsection{Global satisfaction relation}
The global satisfaction relation between models and sentences over a signature $\Delta$ is defined as follows:
\begin{itemize}
\item A model $(W,M)$ globally satisfies a sentence $\gamma$, in symbols, $(W,M)\models \gamma$, if $\gamma^{(W,M)}=W_\vc$.
\end{itemize}
In formal methods, the global satisfaction relation is at the core of formal verification, 
since the engineers need to model software and hardware systems with sets of sentences that need to be satisfied globally.
Therefore, the global satisfaction relation between models and sentences is extended to a satisfaction relation between sets of sentences.
\begin{itemize}
\item A set of sentences $\Gamma$ satisfies a sentence $\gamma$, in symbols, $\Gamma\models\gamma$, if $(W,M)\models\Gamma$ implies $(W,M)\models\gamma$ for all quantum models $(W,M)$.
\end{itemize}
The satisfaction relation between sentences defined in this paper is based on global satisfaction between models and sentences, which is different from the standard one used in modal logic literature;
notice that $\emptyset \models\bigwedge\Gamma\Rightarrow \gamma$ implies $\Gamma\models\gamma$ but the backwards implication does not hold.
\begin{itemize}
\item We write $\Gamma\models^k\gamma$ if $\Gamma\models\at{k}\gamma$ holds.
\end{itemize}
Notice that $\emptyset \models^k \bigwedge\Gamma\Rightarrow \gamma$ implies $\Gamma\models^k\gamma$ but the backwards implication does not hold.

\subsection{Substitutions}
The concept of substitution in $\HDQL$ is inherited  from first-order logic, where one can infer new sentences by substituting terms for variables.
Given a signature $\Delta=(\Sigma,E,\Prop)$ and two sets of vector variables $X$ and $Y$ for $\Delta$, 
any mapping $\theta:X\to T_\Sigma(Y)_\vc$ of variables to terms of sort vector determines:
\begin{enumerate}
\item a function $\theta:\Sen(\Delta[X])\to\Sen(\Delta[Y])$ which replaces, in an inductive manner, in each sentence $\gamma\in\Sen(\Delta[X])$ any variable from $x\in X$ with the term $\theta(x)$;
\item a functor $\red_\theta:\Mod(\Delta[Y])\to\Mod(\Delta[X])$ which maps 
 any quantum model $(W,M)$ over $\Delta[Y]$ to a quantum model $(V,M)$ such that
\begin{enumerate*}[label=(\alph*)]
\item $V$ interprets all symbols in $\Delta$ in the same way that $W$ interprets them, and 
\item $x^V=\theta(x)^W$ for all variables $x\in X$.
\end{enumerate*}
\end{enumerate}
\begin{lemma}[Satisfaction condition for substitutions] \label{lemma:subst}
Given a substitution $\theta:X\to T_\Sigma(Y)_\vc$, we have 
$(W,M)\red_\theta\models^w\gamma$ iff $(W,M)\models^w\theta(\gamma)$,
for all sentences $\gamma$ over $\Delta[X]$, 
all quantum models $(W,M)$ over $\Delta[Y]$, and 
all vectors $w\in W_\vc$.
\end{lemma}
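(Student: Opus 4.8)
The plan is to prove the slightly stronger set-theoretic identity $\gamma^{(W,M)\red_\theta}=\theta(\gamma)^{(W,M)}$, as subsets of a common carrier, for every $\gamma\in\Sen(\Delta[X])$ and every quantum model $(W,M)$ over $\Delta[Y]$; the stated biconditional then follows by reading off membership of $w$. Write $(V,M)$ for $(W,M)\red_\theta$, and observe first that $V$ agrees with $W$ on every symbol of $\Delta$ (in particular on all of $\Sigma^\hil$), so the two models share the same Hilbert space --- hence $V_\vc=W_\vc$, the same inner product, and the same component $M$. As a preliminary I would establish the term-level statement $k^{V}=\theta(k)^{W}$ for every vector term $k$ over $\Sigma[X]$ by the routine induction on term structure: a variable $x\in X$ is handled by the defining clause $x^V=\theta(x)^W$ of the reduct, a symbol of $\Sigma$ is left untouched by $\theta$ and interpreted identically in $V$ and $W$, and the operation case closes under the induction hypothesis. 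This is the classical first-order term lemma, which I would simply cite.

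The main argument is a structural induction on $\gamma$. The propositional base case is immediate, since $\theta(p)=p$ and the two models share $M$ and carrier. For $\at{k}\gamma'$ I would combine the term lemma with the induction hypothesis on $\gamma'$: the set $(\gamma')^{(V,M)}$ equals $\theta(\gamma')^{(W,M)}$, so $k^V$ lies in the former exactly when $\theta(k)^W$ lies in the latter. Conjunction, negation and quantum negation are Boolean, respectively Hilbert-space, operations applied pointwise to the denotations; since $\theta$ commutes with the connectives and the ambient carrier and orthocomplement are the same in $V$ and $W$, the induction hypothesis transports them directly, the case $\sim\gamma'$ relying on the fact that $(\cdot)^\perp$ is computed from the shared inner product. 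For $\nec{\act}\gamma'$ the crucial observation is that actions are built solely from the unitary and measurement symbols of $\Delta$, which $\theta$ does not touch and which $V$ and $W$ interpret identically; hence $\act^V=\act^W$ and $\theta(\nec{\act}\gamma')=\nec{\act}\theta(\gamma')$, and the induction hypothesis on $\gamma'$ yields the result after unfolding the modal clause.

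The genuine difficulty lies in the two binding operators, $\store{z}\gamma_1$ and $\Forall{X'}\gamma_2$, where I must track the interplay between the substitution and the bound variables. Here I would invoke the standard capture-avoidance convention: the bound variable $z$ (respectively the quantified set $X'$) is chosen fresh, so that it lies outside $Y$ and occurs in no term $\theta(x)$, and $\theta$ is extended to $\theta'$ by the identity on the fresh variables, so that $\theta(\store{z}\gamma_1)=\store{z}\theta'(\gamma_1)$ and likewise for the quantifier. The key fact to verify is that reduct commutes with expansion: for any vector $w$ one has $(W^{z\leftarrow w},M)\red_{\theta'}=(V^{z\leftarrow w},M)$. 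This is exactly where freshness is used --- since no $\theta(x)$ mentions $z$, we get $\theta(x)^{W^{z\leftarrow w}}=\theta(x)^{W}=x^{V}$, while $z$ is interpreted as $w$ on both sides. Granting this, the store case unfolds as $w\in(\store{z}\gamma_1)^{(V,M)}$ iff $(V^{z\leftarrow w},M)\models^w\gamma_1$ iff (by the commutation, plus the induction hypothesis over $\Delta[X\cup\{z\}]$) $(W^{z\leftarrow w},M)\models^w\theta'(\gamma_1)$ iff $w\in(\store{z}\theta'(\gamma_1))^{(W,M)}$. The universal case is analogous, but quantifies the commutation over all assignments $a\colon X'\to W_\vc$: sending each $a$ to the corresponding expansion sets up a bijection between the expansions of $V$ and those of $W$ interpreting the variables $X'$, under which $(W_2,M)\red_{\theta'}=(V_2,M)$, so the intersections defining $(\Forall{X'}\gamma_2)^{(V,M)}$ and $(\Forall{X'}\theta'(\gamma_2))^{(W,M)}$ range over corresponding families of sets that agree by the induction hypothesis. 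I expect this commutation-of-reduct-with-expansion step, together with making the freshness hypothesis precise, to be the only non-mechanical part of the proof.
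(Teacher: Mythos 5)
Your proof is correct, and it is essentially the argument the paper points to: the paper gives no proof of its own for this lemma, deferring instead to \cite[Corollary 39]{DBLP:journals/tcs/Gaina17}, and that proof is the same structural induction you carried out (term lemma $k^{(W,M)\red_\theta}=\theta(k)^{W}$, identical interpretation of $\Sigma$-symbols and hence of actions in the reduct, then induction over the sentence-forming operators). You also correctly isolated the only non-routine point --- that $\red_\theta$ commutes with the expansions appearing in the semantics of $\store{z}$ and $\Forall{X'}$, which requires the freshness/capture-avoidance convention on bound variables --- and this is precisely where the hybrid and quantifier cases differ from a plain propositional induction.
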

The proof of Lemma~\ref{lemma:subst} is conceptually identical to the proof of \cite[Corollary 39]{DBLP:journals/tcs/Gaina17}.

\subsection{Closed sentences}
In this section, we study sentences whose semantics consists of a closed subspace.
For example, $r^{(W,M)}$ is a closed subspace, for all closed propositional symbols $r$ and all quantum models $(W,M)$ over the same signature.
Certain sentence operators such as quantum negation and conjunction preserve this property generating a distinguished class of \emph{closed sentences}.

The set of unitary actions is defined by the following grammar,
\begin{center}
$\bact\Coloneqq 
u\mid
\bact\comp\bact\mid
\bact\cup\bact \mid
\bact^*$
\end{center}
where $u:\vc\to\vc\in U$ is a unitary transformation symbol.
The set of closed sentences, denoted $\Sen_c(\Delta)$, is defined by the following grammar:
\begin{center}
$\rho\Coloneqq 
r\mid
\sim \rho \mid 
\rho \wedge\rho \mid 
\nec{\bact}\rho\mid
\Forall{X}\rho_1$
\end{center}
where 
$r\in\Prop^c$ is a closed propositional symbol,
$\bact$ is a unitary action,
$X$ is a set of vector variables for $\Delta$, and
$\rho_1$ is a closed sentence over $\Delta[X]$.
\begin{theorem}\label{th:Hibert-sat}
The semantics of any closed sentence is a closed subspace, i.e.,
given a signature $\Delta$, 
for all models $(W,M)\in|\Mod(\Delta)|$ and 
all closed sentences $\rho\in\Sen_c(\Delta)$, 
$\rho^{(W,M)}$ is a closed subspace.
\end{theorem}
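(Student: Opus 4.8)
The plan is to prove the statement by structural induction on the closed sentence $\rho\in\Sen_c(\Delta)$, reading the statement as universally quantified over all signatures so that the induction hypothesis applies to subsentences that live over extended signatures such as $\Delta[X]$. Throughout I would rely on three closure facts about closed subspaces of a Hilbert space: (i) the orthocomplement of any set of vectors is a closed subspace; (ii) the intersection of an arbitrary family of closed subspaces is again a closed subspace --- it is a substructure with scalar part $\mathbb{C}$, and any Cauchy sequence lying in the intersection converges, in each member of the family, to the same unique limit, which therefore lies in the intersection; and (iii) for a unitary transformation symbol $u\in U$, the preimage $(u^W)^{-1}(\X)$ of a closed subspace $\X$ is a closed subspace, since $u^W$ is a bounded (hence continuous) linear operator and the preimage of a closed subspace under such a map is a closed subspace.

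With these in hand the easy cases are immediate. The base case $\rho=r$ with $r\in\Prop^c$ is exactly the requirement in the definition of a quantum model that $r^{(W,M)}$ be a closed subspace. For $\rho=\sim\rho'$ we have $\rho^{(W,M)}=(\rho'^{(W,M)})^\perp$, closed by (i). For $\rho=\rho_1\wedge\rho_2$ the semantics is $\rho_1^{(W,M)}\cap\rho_2^{(W,M)}$, closed by the induction hypothesis together with (ii). For $\rho=\Forall{X}\rho_1$ the semantics is $\bigcap_{W_2\red_\Sigma=W}\rho_1^{(W_2,M)}$; each expansion $(W_2,M)$ is a quantum model over $\Delta[X]$ sharing the vector carrier $W_\vc$ (the added constants only pick out elements of $W_\vc$, and $E$ still holds since it does not mention them), so by the induction hypothesis each $\rho_1^{(W_2,M)}$ is a closed subspace, and the intersection is closed by (ii).

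The only genuinely delicate case is $\rho=\nec{\bact}\rho'$, which I would handle by a nested induction on the structure of the unitary action $\bact$, proving the auxiliary claim that $\{w\in W_\vc\mid \bact^W(w)\subseteq\X\}$ is a closed subspace for every closed subspace $\X$ (instantiated at $\X=\rho'^{(W,M)}$, closed by the outer induction hypothesis). For a unitary symbol $u$, the interpretation $u^W$ is a total function, so $\bact^W(w)$ is the singleton $\{u^W(w)\}$ and the set reduces to the preimage $(u^W)^{-1}(\X)$, closed by (iii); this is precisely where it matters that $\bact$ ranges only over unitary actions and not over arbitrary actions involving measurements, whose interpretations $q^W(w)=P_\X(w)/\sqrt{\pr_\X(w)}$ are neither total nor linear. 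The compound cases follow from standard modal identities read off the relational semantics of actions: $\nec{\bact_1\comp\bact_2}$ decomposes as $\nec{\bact_1}\nec{\bact_2}$ (apply the construction twice); $\nec{\bact_1\cup\bact_2}$ gives the intersection of the sets for $\bact_1$ and $\bact_2$; and $\nec{\bact^*}$ gives $\bigcap_{n\in\N}\{w\in W_\vc\mid(\bact^n)^W(w)\subseteq\X\}$. In each case closedness is preserved by the nested induction hypothesis together with (ii).

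I expect the main obstacle to be this $\nec{\bact}$ case, and within it the two points that carry the argument are the preimage fact (iii), where boundedness (continuity) and linearity of unitary operators are essential, and the Kleene-star subcase, which produces an infinite intersection and so relies on closedness being preserved under arbitrary intersections rather than merely finite ones --- the same reason (ii) is needed in full generality for the universal-quantifier case. Everything else amounts to routine unfolding of the semantic definitions.
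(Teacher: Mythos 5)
Your proposal is correct and takes essentially the same route as the paper's proof: an outer structural induction on $\rho$, with the $\nec{\bact}$ case handled by the same nested induction on the unitary action $\bact$ proving the auxiliary claim that $\{w\in W_\vc\mid \bact^W(w)\subseteq\X\}$ is closed for every closed subspace $\X$, the same decompositions for $\comp$, $\cup$ and ${}^*$, and the same intersection arguments for conjunction and universal quantification. The only cosmetic difference is the justification of the unitary base case --- the paper notes that unitary transformations are isomorphisms, whereas you argue via continuity of bounded linear operators and preimages of closed subspaces --- and both justifications are valid.
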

\begin{proof}
Assume a quantum model $(W,M)$ and a closed sentence $\rho$ over a signature $\Delta$.
We show that $\rho^{(W,M)}$ is a closed subspace.
We proceed by induction on the structure of $\rho$. 
\begin{proofcases}
\item[$r\in\Prop^c$] By the definition of quantum models.
\item [$\sim\rho$] 
Straightforward, since orthogonal complements are closed subspaces.
\item [$\rho_1\wedge\rho_2$]
Straightforward, since the intersection of closed spaces is a closed subspace.
\item [$\nec{\bact}\rho$]
We prove that $\nec{\bact}\X=\{w\in W_\vc\mid \bact^W(w)\subseteq X\}$ is a closed subspace for all closed subspaces $\X\subseteq W_\vc$.
We proceed by induction on the structure of the linear action $\bact$.
\begin{proofcases}[itemsep=1ex]
\item[$\ut\in U$] Straightforward, since unitary transformations are isomorphisms.
\item [$\bact_1\comp \bact_2$] 
By the induction hypothesis applied to $\bact_2$, $\nec{\bact_2}\X$ is a closed subspace.
By the induction hypothesis applied to $\bact_1$, $\nec{\bact_1}(\nec{\bact_2}\X)$ is a closed subspace.
Since $\nec{\bact_1\comp \bact_2}\X= \nec{\bact_1}(\nec{\bact_2}\X)$, 
$\nec{\bact_1\comp\bact_2}\X$ is a closed subspace.
\item[$\bact_1\cup\bact_2$]
By induction hypothesis, $\nec{\bact_1}\X$ and $\nec{\bact_2}\X$ are closed subspaces.
It follows that $\nec{\bact_1}\X \cap \nec{\bact_2}\X$ is a closed subspace.
Since $\nec{\bact_1\cup\bact_2}\X= \nec{\bact_1}\X \cap \nec{\bact_2}\X$,
we have that $\nec{\bact_1\cup\bact_2}\X$ is a closed subspace.

\item[$\bact^*$]
By induction hypothesis,
$\nec{\bact^n}\X$ is a closed subspace for all $n\in \N$.
\footnote{We let $\nec{\bact^0} \X= \X$.} 
It follows that $\bigcap_{n\in\N} \nec{\bact^n}\X$ is a closed subspace. 
Since $\nec{\bact^*}\X= \bigcap_{n\in\N} \nec{\bact^n}\X$, 
we have that $\nec{\bact^*}\X$ is a closed subspace.
\end{proofcases}
By induction hypothesis,
$\rho^{(W,M)}$ is a closed subspace.
By the statement above,  $\nec{\bact} \rho^{(W,M)}$ is a closed subspace.
\item[$\Forall{X}\rho_1$]
By induction hypothesis, $\rho_1^{(W_1,M)}$ is a closed subspace, for all $\Sigma[X]$-expansions $W_1$ of $W$.
Since the intersection of closed spaces is a closed subspace, $\displaystyle\bigcap_{W_1\red_\Sigma=W} \rho_1^{(W_1,M)}=(\Forall{X}\rho_1)^{(W,M)}$ is a closed subspace.
\end{proofcases}
\psqed\end{proof}

Closed sentences have some unique features which distinguish them from the rest of the sentences.
An immediate corollary of Theorem~\ref{th:Hibert-sat} is the following result.
\begin{corollary} \label{cor:Hilbert-sat}
For all models $(W,M)$ and all closed sentences $\rho$ over the same signature, we have:
\begin{enumerate}
\item $(W,M)\models^{(0^W)} \rho$, where $0:\to \vc$ stands for the origin vector.

\item If $(W,M)\models^w \rho$ then $(W,M)\models^{a w} \rho$, 
for all vectors $w\in W_\vc$ and all scalars $a\in W_\cn$.

\item If $(W,M)\models^{w_1} \rho$ and $(W,M)\models^{w_2} \rho$ then $(W,M)\models^{w_1 + w_2} \rho$, 
for all vectors $w_1,w_2\in W_\vc$.

\item Let $\{w_n\}_{n\in \N}$ be a Cauchy sequence of vectors, and let $w$ be its limit.

If $(W,M)\models^{w_n} \rho$ for all $n\in \N$ then $(W,M)\models^{w} \rho$.
\end{enumerate}
\end{corollary}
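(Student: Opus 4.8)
The plan is to derive all four statements directly from Theorem~\ref{th:Hibert-sat}, which guarantees that $\rho^{(W,M)}$ is a closed subspace of the Hilbert space $W\red_{\Sigma^\hil}$. The key observation is that, by the very definition of the local satisfaction relation, $(W,M)\models^w\rho$ is equivalent to $w\in\rho^{(W,M)}$; hence each of the four claims translates into a membership assertion about the closed subspace $\rho^{(W,M)}$, and each such assertion is precisely one of the closure conditions that define a closed subspace.

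First I would recall that a closed subspace is a first-order substructure $\X\subseteq W\red_{\Sigma^\hil}$ satisfying the two extra conditions $\X_\cn=\mathbb{C}$ and closure under limits of Cauchy sequences. Being a first-order substructure means $\X$ is closed under every operation in $F^\vc$, in particular under the nullary symbol $0:\to\vc$, the vector addition $\_+\_:\vc~\vc\to\vc$, and the scalar multiplication $\_\_:\cn~\vc\to\vc$. Thus ``substructure'' already encodes that $\rho^{(W,M)}$ is a linear subspace.

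With this in hand the four items follow mechanically. For the first, since $\rho^{(W,M)}$ is a substructure it contains the interpretation of the constant $0:\to\vc$, so $0^W\in\rho^{(W,M)}$, i.e.\ $(W,M)\models^{0^W}\rho$. For the second and third, closure of the substructure under scalar multiplication and under vector addition gives that $w\in\rho^{(W,M)}$ implies $a w\in\rho^{(W,M)}$, and that $w_1,w_2\in\rho^{(W,M)}$ implies $w_1+w_2\in\rho^{(W,M)}$, respectively. For the fourth, the defining topological condition of a closed subspace states exactly that the limit of any Cauchy sequence of vectors drawn from $\rho^{(W,M)}$ again lies in $\rho^{(W,M)}$; applying this to the Cauchy sequence $\{w_n\}_{n\in\N}$ with limit $w$ yields $w\in\rho^{(W,M)}$, that is, $(W,M)\models^{w}\rho$.

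There is no genuine obstacle here: the entire mathematical content is carried by Theorem~\ref{th:Hibert-sat}, and the remaining work is only to unfold the definition of ``closed subspace'' into its algebraic and topological closure clauses and to pair each clause with the corresponding item. The one point that deserves a line of care is making explicit that the first-order notion of substructure already subsumes the linear-subspace closure properties (containing $0^W$, and being closed under $+$ and scalar multiplication), so that the first three items require no argument beyond invoking the substructure condition, while only the last item appeals to the additional Cauchy-completeness clause.
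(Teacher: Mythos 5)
Your proof is correct and takes exactly the route the paper intends: the paper states this result as an ``immediate corollary'' of Theorem~\ref{th:Hibert-sat} with no further argument, and your unfolding of the definition of a closed subspace --- first-order substructure closure under $0:\to\vc$, vector addition, and scalar multiplication for items (1)--(3), plus the Cauchy-completeness clause for item (4) --- is precisely the implicit reasoning. The one point of care you flag (that ``substructure'' already encodes the linear-subspace conditions) is also the right one to make explicit.
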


Obviously, each closed sentence is, in particular, a sentence, that is, $\Sen_c(\Delta)\subseteq \Sen(\Delta)$.
The other quantum operations are introduced as abbreviations.
For example, the Sasaki hook $\rho_1\leadsto \rho_2$ is defined as $\sim(\rho_1 \wedge \sim(\rho_1\wedge\rho_2))$, and the quantum disjunction $\rho_1\oplus\rho_2$ is defined as $\sim(\sim\rho_1\wedge \sim\rho_2)$ for all closed sentences $\rho_1$ and $\rho_2$.
The following result shows that the Sasaki hook can be viewed as a quantum implication for closed sentences.

\begin{lemma} \label{lemma:hook}
For all quantum models $(W,M)$ and 
all closed sentences $\rho_1$ and $\rho_2$ over the same signature $\Delta$, we have:
\begin{enumerate}
\item $\rho_1^{(W,M)}\cap(\rho_1\leadsto\rho_2)^{(W,M)}\subseteq \rho_2^{(W,M)}$, and
\item $\rho_1^{(W,M)}\subseteq \rho_2^{(W,M)}$ iff $(W,M)\models \rho_1\leadsto \rho_2$.
\end{enumerate}
\end{lemma}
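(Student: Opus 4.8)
The plan is to reduce both items to purely algebraic facts about the Sasaki hook on closed subspaces. Throughout, write $\X \coloneqq \rho_1^{(W,M)}$ and $\Y \coloneqq \rho_2^{(W,M)}$; by Theorem~\ref{th:Hibert-sat} both are closed subspaces of $W_\vc$, so the orthocomplement identities and the projection $P_\X$ are all available. The first step I would carry out is to unwind the abbreviation $\rho_1 \leadsto \rho_2 = \sim(\rho_1 \wedge \sim(\rho_1 \wedge \rho_2))$ using only the clauses for $\wedge$ and $\sim$ in the local satisfaction relation. This gives
$(\rho_1 \leadsto \rho_2)^{(W,M)} = (\X \cap (\X \cap \Y)^\perp)^\perp$.
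I would then identify this with the algebraic Sasaki hook $\X \leadsto \Y = \X^\perp \oplus (\X \cap \Y)$: applying the direct-sum identity Theorem~\ref{th:hilbert}, part~\ref{th:hilbert-2}, with $\X^\perp$ and $\X \cap \Y$ in place of its two arguments, together with the involutivity $\X^{\perp\perp} = \X$ of Theorem~\ref{th:hilbert}, part~\ref{th:hilbert-1}, yields $\X^\perp \oplus (\X \cap \Y) = (\X \cap (\X \cap \Y)^\perp)^\perp$. By Theorem~\ref{th:projection} this subspace equals $P_\X^{-1}(\Y)$, so altogether $(\rho_1 \leadsto \rho_2)^{(W,M)} = P_\X^{-1}(\Y)$.

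With this identity in hand, item (1) becomes the set-theoretic claim $\X \cap P_\X^{-1}(\Y) \subseteq \Y$, which I would prove pointwise. If $w \in \X$ then, since $w = w + 0$ with $w \in \X$ and $0 \in \X^\perp$, uniqueness of the orthogonal decomposition forces $P_\X(w) = w$; if moreover $w \in P_\X^{-1}(\Y)$ then $P_\X(w) \in \Y$, whence $w = P_\X(w) \in \Y$. This is the whole of item (1).

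For item (2), the forward direction is immediate from the algebraic description: if $\X \subseteq \Y$ then $\X \cap \Y = \X$, so $\X \leadsto \Y = \X^\perp \oplus \X = W_\vc$ by the orthogonal decomposition $\Hil = \X \oplus \X^\perp$ (last item of Theorem~\ref{th:hilbert}), which is precisely $(W,M) \models \rho_1 \leadsto \rho_2$. For the converse I would reuse item (1): if $(W,M) \models \rho_1 \leadsto \rho_2$ then $(\rho_1 \leadsto \rho_2)^{(W,M)} = W_\vc$, so every $w \in \X$ lies in $\X \cap (\rho_1 \leadsto \rho_2)^{(W,M)} \subseteq \Y$, giving $\X \subseteq \Y$.

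The only real content is in the first step, namely matching the sentence-level definition of the Sasaki hook with its subspace-level counterpart, and in particular invoking Theorem~\ref{th:hilbert}, part~\ref{th:hilbert-2}, with the correct arguments while keeping track of which orthocomplements are being taken. Once that identity and the formula $P_\X^{-1}(\Y) = \X^\perp \oplus (\X \cap \Y)$ from Theorem~\ref{th:projection} are in place, everything else is routine and reduces to the uniqueness of orthogonal decompositions together with $\Hil = \X \oplus \X^\perp$.
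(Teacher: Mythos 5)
Your proof is correct, but it follows a genuinely different route from the paper's at the key step. The paper handles item (1) with a single application of Theorem~\ref{th:hilbert}, part~\ref{th:hilbert-3} (the ``local closure equals global closure'' identity): taking $\X=\rho_1^{(W,M)}$ and $\Y=\rho_2^{(W,M)}$, it observes that $\X\cap\bigl(\X\cap(\X\cap\Y)^\perp\bigr)^\perp=(\X\cap\Y)^{\perp\perp}=\X\cap\Y\subseteq\Y$, so membership in $\rho_1^{(W,M)}\cap(\rho_1\leadsto\rho_2)^{(W,M)}$ immediately yields satisfaction of $\rho_1\wedge\rho_2$; Theorem~\ref{th:projection} is never invoked. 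You instead combine parts~\ref{th:hilbert-1} and~\ref{th:hilbert-2} of Theorem~\ref{th:hilbert} with Theorem~\ref{th:projection} to identify $(\rho_1\leadsto\rho_2)^{(W,M)}$ with $P_\X^{-1}(\Y)$, and then argue pointwise via uniqueness of orthogonal decompositions (so that $P_\X(w)=w$ for $w\in\X$). Both arguments are sound; all your side conditions check out ($\X\cap\Y$ is closed, orthocomplements are closed, and the $\oplus$ operation is commutative, so $\X^\perp\oplus\X=W_\vc$ in your forward direction of item (2), where the paper instead computes $(\sim(\rho_1\wedge\sim\rho_1))^{(W,M)}=W_\vc$ directly). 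The paper's route is shorter and stays entirely at the level of the lattice identities in Theorem~\ref{th:hilbert}; yours is longer but buys an operational reading of the Sasaki hook as the preimage of $\rho_2$'s subspace under the projective measurement onto $\rho_1$'s subspace, which matches the dynamic interpretation of quantum implication advocated in the Baltag--Smets tradition the paper cites, and it puts Theorem~\ref{th:projection} (otherwise idle in this proof) to work. The backward direction of item (2) is identical in both proofs, being a direct consequence of item (1).
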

\begin{proof} \

\begin{enumerate}
\item Assume that $(W,M)\models^w \rho_1$ and $(W,M)\models^w \rho_1\leadsto \rho_2$.
We show that $(W,M)\models^w\rho_2$: 
\begin{proofsteps}{19em}

 $w\in \rho_1^{(W,M)} \cap  (\rho_1^{(W,M)} \cap ((\rho_1 \wedge \rho_2)^{(W,M)})^\perp)^\perp$ & 
since  $(W,M)\models^w \rho_1$ and $(W,M)\models^w \rho_1\leadsto \rho_2$ \\

$\rho_1^{(W,M)} \cap (\rho_1^{(W,M)} \cap ((\rho_1 \wedge \rho_2)^{(W,M)})^\perp)^\perp=$ \newline
$((\rho_1\wedge \rho_2)^{(W,M)})^{\perp\perp} =(\rho_1\wedge \rho_2)^{(W,M)}$ &
by Theorem~\ref{th:hilbert}~(\ref{th:hilbert-3}), which says that the local closure is equal to the global closure \\

$w\in (\rho_1\wedge \rho_2)^{(W,M)}$ &
from the first two items above \\

$(W,M)\models^w\rho_1\wedge \rho_2$ & by semantics \\

$(W,M)\models^w\rho_2$ & by semantics
 \end{proofsteps}

\item We treat each implication separately.
\begin{proofcases}
\item[$\Rightarrow$]
Since $\rho_1^{(W,M)}\subseteq \rho_2^{(W,M)}$,
we have $(\rho_1\wedge \rho_2)^{(W,M)}=\rho_1^{(W,M)} \cap \rho_2^{(W,M)}= \rho_1^{(W,M)}$.
It follows that $(\rho_1\leadsto \rho_2)^{(W,M)} =(\sim(\rho_1\wedge\sim(\rho_1\wedge\rho_2)))^{(W,M)}=
(\sim(\rho_1\wedge \sim\rho_1))^{(W,M)}=W_\vc$.
\item[$\Leftarrow$]
Assume that $(W,M)\models\rho_1\leadsto \rho_2$. 
Let $w\in \rho_1^{(W,M)}$.
We have that $(W,M)\models^w\rho_1$ and $(W,M)\models^w\rho_1\leadsto \rho_2$.
By the proof above, $(W,M)\models^w\rho_2$ meaning that $w\in\rho_2^{(W,M)}$.
Since $w\in W_\vc$ was arbitrarily chosen, we get $\rho_1^{(W,M)}\subseteq \rho_2^{(W,M)}$.
\end{proofcases}
\end{enumerate}
\psqed\end{proof}

Applications to the specification and verification of quantum programs show that it is not necessary to interpret a formula as a closed subspace. See Section~\ref{sec:example} for the specification of two benchmark quantum protocols.
However, the present work includes results on closed sentences, which in turn, enable comparisons and applications of the present ideas to other modal variants of quantum logics.

\section{Initiality}
In this section, we define a notion of Horn clause in $\HDQL$ and then we prove the existence of initial models for any set of clauses. 
\begin{enumerate}
\item Let $\Sen_b(\Delta)$ be the set of sentences given by the grammar
\begin{center}
$\varphi \Coloneqq
p \mid
\varphi \wedge \varphi \mid
\at{k}\varphi \mid
\nec{\act}\varphi\mid
\store{z}\varphi_1$
\end{center} 
where
$p\in\Prop$ is any propositional symbol,
$k\in T_{\Sigma,\vc}$ is a term of sort vector, 
$\act$ is an action, 
$z$ is a variable for $\Delta$, and
$\varphi_1$ is a basic sentence over $\Delta[z]$.
We call these sentences \emph{locally basic}, or, simply, \emph{basic}~\cite{DBLP:journals/jacm/Gaina20}.
\item Let $\Sen_h(\Delta)$ be the set of Horn clauses given by the following grammar:
\begin{center}
$\gamma \Coloneqq
p \mid
\sim \varphi \mid
\gamma\wedge\gamma\mid
\rho_1\leadsto\rho_2\mid
\varphi\Rightarrow \gamma\mid
\at{k}\gamma\mid
\nec{\act}\gamma \mid 
\store{z}\gamma_1\mid
\Forall{X}\gamma_2$,
\end{center}
where
$p\in\Prop$ is any propositional symbol,
$\varphi$ is a basic sentence,
$\rho_1$ is a closed basic sentence,
$\rho_2$ is a closed Horn clause,
$k\in T_{\Sigma,\vc}$ is a term of sort vector,
$\act$ is an action,
$z$ is a vector variable for $\Delta$, 
$\gamma_1$ is a Horn clause over $\Delta[z]$,
$X$ is a set of vector variables for $\Delta$, and 
$\gamma_2$ is a Horn clause over $\Delta[X]$.
\end{enumerate}
In this this section, 
we fix a signature $\Delta=(\Sigma,\emptyset,\Prop)$ and 
a first-order model $\W$ over $\Sigma$ such that 
\begin{itemize}
\item $\W\red_{\Sigma^\hil}$ is a Hilbert space,
\item $u^\W:\W_\vc\to\W_\vc$ is a unitary transformation for all $u:\vc\to\vc\in U$, and
\item $q^\W:\W_\vc\to\W_\vc$ is a quantum measurement for all $q:\vc\to\vc\in Q$.
\end{itemize}
For the sake of simplicity, we assume that the sets of constants $C$ and $D$ are empty, which means that $\Sigma$ is of the form $(S^\hil,F^\hil\cup U\cup Q,P^\hil)$, where $(S^\hil,F^\hil,P^\hil)$ is the signature of Hilbert spaces, $U$ is a set of unitary transformation symbols and $Q$ is a set of measurement symbols.
We make the following notational conventions:
\begin{itemize} 
\item Let $\Sigma_\W$ be the first-order signature obtained from $\Sigma$ by adding all elements in $\W$ as constants, that is:
\begin{enumerate*}[label=(\alph*)]
\item $\Sigma_\W=(S^\hil,F^\hil\cup U \cup Q\cup D_\W \cup C_\W, P^\hil)$,
\item $D_\W=\W_\vc$, the set of vectors in $\W$, and 
\item $C_\W=\mathbb{C}$, the set of complex numbers.
\end{enumerate*}
\item Let $\W_\W$ be the first-order model over $\Sigma_\W$ obtained from $\W$ by interpreting each constant $c\in C_\W$ as the complex number $c$ and each constant $w\in D_\W$ as the vector $w$.
\item Let $E_\W$ be the set of first-order (ground) equations and relations satisfied by $\W_\W$, which means that $(\Sigma_\W,E_\W)$ is the positive diagram of $\W$.
\item We will work over the signature $\Delta_\W=(\Sigma_\W,E_\W,\Prop)$ for the rest of the paper.
\end{itemize}
In classical model theory, it is well-known that the $\Sigma_\W$-models which satisfy $E_\W$ are in one-to-one correspondence with the $\Sigma$-homomorphisms with the domain $\W$, that is, there is an isomorphism of categories $\Mod(\Sigma_\W,E_\W)\cong \W/\Mod(\Sigma)$.
The models of interest are described in the following definition.
\begin{definition} [Quantum models generated by sets of sentences]\label{def:init}
Any set of sentences $\Gamma\subseteq \Sen(\Delta_\W)$ defines a quantum model $(W^\Gamma,M^\Gamma)$ as follows:
\begin{enumerate*}[label=(\alph*)]
\item $W^\Gamma = \W_\W$, and 
\item $M^\Gamma:W^\Gamma_\vc\to |\Mod^\PL(\Sigma)|$ is the mapping defined by
$M^\Gamma_w=\{p\in\Prop\mid \Gamma\models^w p\}$ for all vectors $w\in W^\Gamma_\vc$.
\end{enumerate*}
\end{definition}
We show that if we restrict the set of sentences $\Gamma$ to \emph{Horn clauses} that are satisfiable, $(W^\Gamma,M^\Gamma)$ is the initial model of $\Gamma$. 

\subsection{Examples} \label{sec:example}
There are many examples of quantum gates, used in quantum circuits.
However,
the single qubit gates and
the controlled-NOT gate in the $2$-qubit system,
$$CNOT=\begin{pmatrix}
1 & 0 & 0 & 0\\
0 & 1 & 0 & 0\\
0 & 0 & 0 & 1\\
0 & 0 & 1 & 0
\end{pmatrix}$$ 
are the prototypes for all other gates.
Examples of frequently used single qubit gates are 
the Hadamard gate 
$$H=\displaystyle
\frac{1}{\sqrt{2}}
\begin{pmatrix}
1 & 1 \\
1 & -1
\end{pmatrix}$$
and the Pauli gates
$$X=
\begin{pmatrix}
0 & 1 \\
1 & 0
\end{pmatrix},
Y=
\begin{pmatrix}
0 & -i \\
i & 0
\end{pmatrix},
\text{ and }
Z=
\begin{pmatrix}
1 & 0 \\
0 & -1
\end{pmatrix}.$$
Therefore, fixing a Hilbert space together with some quantum gates and projective measurements, that is, a first-order model $\W$, is not a shortcoming from the point of view of applications. 
The Hilbert space together with the quantum gates and the projective measurements can be regarded as predefined data types necessary to specify the dynamics of quantum programs.
\paragraph{Superdense coding}
The superdense coding is a technique of sending two classical bits of information over a quantum channel.
Therefore, superdense coding involves two parties, traditionally known as Alice and Bob, which share a pair of qubits in the entangled state $\ket{\beta_{00}}=\displaystyle\frac{\ket{00}+\ket{11}}{\sqrt{2}}$.
Alice is initially in possession of the first qubit, while Bob has possession of the second qubit.
If Alice wishes to send the bits $ij$ to Bob, where $i,j\in\{0,1\}$, she applies the gate $X^i$ followed by the gate $Z^j$ to her qubit.
Then Bob is in possession of one of the four Bell states, 
\begin{center}
$\ket{\beta_{00}} = \displaystyle\frac{\ket{00}+\ket{11}}{\sqrt{2}}$,
$\ket{\beta_{01}} = \displaystyle\frac{\ket{01}+\ket{10}}{\sqrt{2}}$,
$\ket{\beta_{10}} = \displaystyle\frac{\ket{00}+\ket{11}}{\sqrt{2}}$,
$\ket{\beta_{11}} = \displaystyle\frac{\ket{01}-\ket{10}}{\sqrt{2}}$,
\end{center}
depending on the classical bits Alice wished to send to him.
Bob can now simply perform a measurement of the joint $2$-qubit state with respect to the Bell basis $\{\ket{\beta_{00}},\ket{\beta_{01}},\ket{\beta_{10}},\ket{\beta_{11}}\}$.

\begin{center}
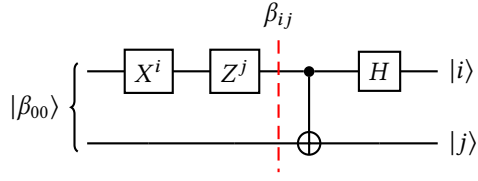
\begin{figure}
\begin{quantikz}
\lstick[wires=2]{\ket{\beta_{00}}} & \gate{X^i} & \gate{Z^j} \slice{$\beta_{ij}$} &  \ctrl{1} & \gate{H} & \qw \rstick{\ket{i}} \\
 & \qw & \qw  & \targ{} &  \qw &  \qw \rstick{\ket{j}} 
\end{quantikz}
\caption{Superdense coding}
\label{fig:superdense}
\end{figure}
\end{center}

The quantum circuit for superdense coding is depicted in Figure~\ref{fig:superdense}.
The measurement with respect to the Bell basis is depicted on the left side of the dashed vertical line in Figure~\ref{fig:superdense}.
If Bob receives $\ket{ij}$, this means that Alice sent him the bits $i$ and $j$, where $i,j\in\{0,1\}$.

The Hilbert space for this protocol is the $2$-qubit system $\Hil\otimes \Hil$, where $\Hil$ denotes the 2-dimensional Hilbert space.
The starting signature is $\Delta=(\Sigma,\emptyset,\Prop)$, where 
\begin{enumerate}
\item $\Sigma$ is obtained from the signature of Hilbert spaces $\Sigma^\hil$ defined in Section~\ref{sec:hilbert} by adding the set of unitary transformation symbols $U=\{ u_0,u_1,\sigma_0,\sigma_1,\delta_0,\delta_1\}$, and
\item $\Prop=\{p_{00},p_{01},p_{10},p_{11}\}$.
\end{enumerate}

The first-order model $\W$ over $\Sigma$ is obtained from the 2-qubit system $\Hil\otimes \Hil$ by interpreting the unitary transformation symbols as follows:
\begin{enumerate*}[label=(\alph*)]
\item $u_0^\W=CNOT$,
\item $u_1^\W=H\otimes I_2$,
\item $\sigma_0^\W=X^0\otimes I_2$,
\item $\sigma_1^\W=X^1\otimes I_2$, 
\item $\delta_0^\W=Z^0\otimes I_2$, and
\item $\delta_1^\W=Z^1\otimes I_2$,
\end{enumerate*}
where $I_n$ denotes the $n$-dimensional identity matrix for all $n>1$.
Notice that $X^0=Z^0=I_2$ and $X^0\otimes I_2=Z^0\otimes I_2=I_4$.

We define $\Gamma=\{\at{\ket{ij}}p_{ij} \mid  i,j \leq 1\}$.
The correctness of superdense coding means that $(W^\Gamma,M^\Gamma)\models^{\ket{\beta_{00}}} \nec{\act_{ij}} p_{ij}$, where $\act_{ij}=\sigma_i\comp\delta_j\comp u_0\comp u_1$ for all $i,j\in\{0,1\}$.
The results presented in Section~\ref{sec:basic} show that:
\begin{enumerate}
\item $(W^\Gamma,M^\Gamma)$ is the initial model of $\Gamma$, and 
\item $(W^\Gamma,M^\Gamma)\models^{\ket{\beta_{00}}} \nec{\act_{ij}} p_{ij}$ iff
$\Gamma\models^{\ket{\beta_{00}}} \nec{\act_{ij}} p_{ij}$ for all $i,j\in\{0,1\}$.
\end{enumerate}
If $(W^\Gamma,M^\Gamma)$ is the initial model of $\Gamma$ and $\Gamma\models^{\ket{\beta_{00}}} \nec{\act_{ij}} p_{ij}$ then 
$(W^\Gamma,M^\Gamma)\models^{\ket{\beta_{00}}} \nec{\act_{ij}} p_{ij}$. 
The proof of $\Gamma\models^{\ket{\beta_{00}}} \nec{\act_{ij}} p_{ij}$ consists of the following steps:

\begin{proofsteps}{25em}
$\Gamma\models^{\ket{\beta_{00}}} \nec{\sigma_i\comp\delta_j\comp u_0\comp u_1} p_{ij}$ &
by semantics\\
$\Gamma\models^{\ket{\beta_{00}}} \nec{\sigma_i}\nec{\delta_j}\nec{u_0}\nec{u_1} p_{ij}$ & 
by semantics\\
$\Gamma\models^{k_{ij}} p_{ij}$, where $k_{ij}=u_1(u_0(\delta_j(\sigma_i(\ket{\beta_{00}}))))$ is a term in $T_{\Sigma,\vc}$ & 
by semantics\\
$\Gamma\models^{\ket{ij}} p_{ij}$ &  
since $k_{ij}^\W=u_1^\W(u_0^\W(\ket{\beta_{ij}}))=\ket{ij}$\\
$\Gamma\models\at{\ket{ij}} p_{ij}$ which holds &
since $\{\at{\ket{ij}} p_{ij}\}\in\Gamma$
\end{proofsteps}

In the future, the arguments based on semantics used in the proof above will be replaced by arguments based on proof rules.
However, formal verification is out of the scope of the present paper.

\paragraph{Quantum teleportation}
Quantum teleportation is a technique for moving the state of a quantum system in the absence of a quantum communication channel linking the sender to the recipient.
There are two agents, called Alice and Bob,  who are separated in space. 
Each has one qubit of $\ket{\beta_{00}}=\displaystyle\frac{\ket{00}+\ket{11}}{\sqrt{2}}$.
In addition to her part of $\beta_{00}$, Alice holds a qubit $\w=\alpha \ket{0} +\beta\ket{1}$, where $\alpha$ and $\beta$ are unknown amplitudes.
Alice ``teleports'' the qubit $\ket{\w}$ to Bob, that is, she performs a program that has the input $\ket{\w_0}=\ket{\w}\otimes \ket{\beta_{00}}$ and the output $\ket{i}\otimes\ket{j} \otimes \ket{\w}$, where $i,j\in\{0,1\}$.
The quantum circuit is depicted in Figure~\ref{fig:teleport}.

The Hilbert space for this protocol is the $3$-qubit system $\Hil\otimes \Hil\otimes \Hil$, where $\Hil$ denotes the 2-dimensional Hilbert space.
The starting signature is $\Delta=(\Sigma,\emptyset,\Prop)$, where 
\begin{enumerate}
\item $\Sigma$ is obtained from the signature of Hilbert spaces $\Sigma^\hil$ defined in Section~\ref{sec:hilbert} by adding 
the set of unitary transformation symbols $U=\{ u_0,u_1,\sigma_0,\sigma_1,\delta_0,\delta_1\}$ and 
the set of measurement symbols $Q=\{q_{00},q_{01},q_{10},q_{11}\}$, and
\item $\Prop=\{p\}$.
\end{enumerate}

The first-order model $\W$ over $\Sigma$ is obtained from the 3-qubit system $\Hil\otimes \Hil\otimes \Hil$ by interpreting
\begin{enumerate}
\item the unitary transformation symbols as follows:
\begin{enumerate*}
\item $u_0^\W= CNOT\otimes I_2$, 
\item $u_1^\W=H\otimes I_4$,
\item $\sigma_0^\W=I_4\otimes X^0$, 
\item $\sigma_1^\W=I_4 \otimes X^1$,
\item $\delta_0^\W=I_4\otimes Z^0$, 
\item $\delta_1^\W=I_4\otimes Z^1$, and
\end{enumerate*}

\item the measurement symbols as follows: 
$q_{ij}^\W$ is the measurement corresponding to the projection on the closed subspace generated by $\{\ket{ij}\otimes\ket{0},\ket{ij}\otimes\ket{1}\}$, where $i,j\in\{0,1\}$.
\end{enumerate}

\begin{center}
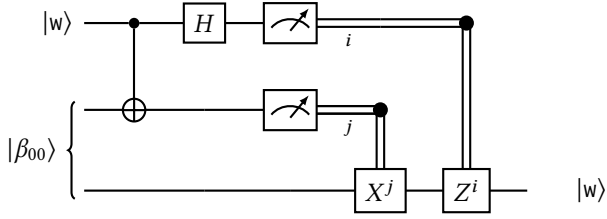
\begin{figure}
\begin{quantikz}
\lstick{\ket{\w}} & \ctrl{1} & \gate{H}  & \meter{} & \cw{i}  & \cwbend{2} \\
\lstick[wires=2]{$\ket{\beta_{00}}$} & \targ{} & \qw & \meter{} & \cwbend{1} \cw{j}   \\
  &  \qw  &   \qw   & \qw  & \gate{X^j} & \gate{Z^i} & \qw & \rstick{$\ket{\w}$}
\end{quantikz}
\caption{Quantum teleportation}
\label{fig:teleport}
\end{figure}
\end{center}

We define $\Gamma=\{\at{\ket{ij}\otimes \ket{\w}} p \mid i,j\leq 1 \}$.
The requirement is that the output is $\ket{i}\otimes\ket{j}\otimes \ket{\w}$, that is, 
$(W^\Gamma,M^\Gamma)\models^{\ket{\w}\otimes \ket{\beta_{00}}}  \nec{ \bigcup_{i,j\in\{0,1\}} \act_{ij}} p$, 
where $\act_{ij}=u_0\comp u_1 \comp q_{ij} \comp \sigma_j \comp \delta_i$ for all $i,j\in\{0,1\}$.
The results presented in Section~\ref{sec:basic} show that:
\begin{enumerate}
\item $(W^\Gamma,M^\Gamma)$ is the initial model of $\Gamma$, and
\item $(W^\Gamma,M^\Gamma)\models^{\ket{\w}\otimes \ket{\beta_{00}}}  \nec{ \bigcup_{i,j\in\{0,1\}} \act_{ij}} p$ iff 
$\Gamma\models^{\ket{\w}\otimes \ket{\beta_{00}}} \nec{\bigcup_{i,j\in\{0,1\}} \act_{ij}} p$.
\end{enumerate}
The proof of $\Gamma\models^{\ket{\w}\otimes \ket{\beta_{00}}} \nec{\bigcup_{i,j\in\{0,1\}} \act_{ij}} p$ consists of the following steps:
\begin{proofsteps}{27em}
$\Gamma\models^{\ket{\w}\otimes \ket{\beta_{00}}} \nec{\bigcup_{i,j\in\{0,1\}} \act_{ij}} p$ & 
by semantics\\

$\Gamma\models^{\ket{\w}\otimes \ket{\beta_{00}}} \nec{\act_{ij}} p$ for all $i,j\in\{0,1\}$ &
by semantics\\
$\Gamma\models^{\ket{\w}\otimes \ket{\beta_{00}}}  \nec{u_0\comp u_1 \comp q_{ij} \comp \sigma_j \comp \delta_i} p$ for all $i,j\in\{0,1\}$ & 
by semantics\\
$\Gamma\models^{k_{ij}} p$ for all $i,j\in\{0,1\}$, \newline 
where $k_{ij}=\delta_i(\sigma_j(q_{ij}(u_1(u_0(\ket{\w}\otimes \ket{\beta_{00}})))))$ for all $i,j\in\{0,1\}$  & 
by semantics\\
$\Gamma\models^{\ket{ij}\otimes\ket{w}}p$ for all $i,j\in\{0,1\}$ & 
since $k_{ij}^\W=\ket{ij}\otimes\ket{w}$ \\
$\Gamma\models \at{\ket{ij}\otimes\ket{w}}p$ for all $i,j\in\{0,1\}$, which holds & 
since $\{\at{\ket{ij}\otimes\ket{w}}p\}\in\Gamma$
\end{proofsteps}

\subsection{Basic level} \label{sec:basic}
In the fragment obtained by restricting the syntax to basic sentences,
the validity of the semantic entailment relation can be established by considering only the initial models described in Definition~\ref{def:init}.
In other words, the satisfaction of basic sentences is encapsulated in initial models.
\begin{theorem} \label{th:basic}
Let $\Gamma\subseteq \Sen(\Delta_\W)$ be any set of sentences (not necessarily basic).
Then $\Gamma\models^w\varphi$ iff $( W^\Gamma,M^\Gamma)\models^w\varphi$ 
for all basic sentences $\varphi\in\Sen_b^\HDQL(\Delta)$ and all vectors $w\in \W_\vc$.
\end{theorem}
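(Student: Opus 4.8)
The plan is to prove the equivalence by induction on the structure of $\varphi$, after strengthening the statement so that $\varphi$ ranges over all basic sentences over $\Delta_\W$ rather than only over $\Delta$; this strengthening is forced by the \emph{store} case, whose natural reduction introduces a vector constant of $\W$. The whole argument rests on one observation about the models of $\Gamma$: every quantum model $(V,N)$ over $\Delta_\W$ with $(V,N)\models\Gamma$ satisfies $E_\W$ by the very definition of a quantum model, so the diagram correspondence $\Mod(\Sigma_\W,E_\W)\cong\W/\Mod(\Sigma)$ yields a homomorphism $h\colon\W_\W\to V$ with $h(w)=w^V$ for every vector constant $w$, and $h$ is injective by Lemma~\ref{lemma:inj}. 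Since $W^\Gamma=\W_\W$, the generated model maps homomorphically into every model of $\Gamma$, and I would use this fact uniformly in both directions of the equivalence.

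The technical core is an auxiliary lemma, proved by induction on the action, asserting that for every action $\act$, every $w\in W^\Gamma_\vc$, and every such homomorphism $h\colon W^\Gamma\to V$,
\[ h\bigl(\act^{W^\Gamma}(w)\bigr)=\act^V\bigl(h(w)\bigr). \]
The base cases of a unitary $\ut$ and of a measurement $q$ hold because $h$ preserves these function symbols, so each side is the singleton obtained by applying the corresponding function; the cases of composition, union and Kleene star then follow by unfolding the relational semantics and commuting $h$ with unions, the inductive equalities reindexing the unions (injectivity of $h$ makes this transparent, although it is not indispensable). I expect this lemma, together with the $\nec{\act}$ step that consumes it, to be the main obstacle, since it is the only place where the relational nature of actions must be reconciled with the functional behaviour of $h$.

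With these in place the easy cases of the main induction proceed smoothly. For $\varphi=p$ the two sides are literally identical by the definition $M^\Gamma_w=\{p\mid\Gamma\models^w p\}$. For $\varphi_1\wedge\varphi_2$ one uses that universal quantification over the models of $\Gamma$ distributes over conjunction, so the case reduces to the inductive hypotheses for $\varphi_1$ and $\varphi_2$. For $\at{k}\varphi'$ I would note that $E_\W$ contains the ground equation identifying the term $k$ with the constant naming the vector $k^\W\in\W_\vc$, so in every model of $\Gamma$ and in $(W^\Gamma,M^\Gamma)$ alike the sentence $\at{k}\varphi'$ holds at a state exactly when $\varphi'$ holds at $k^\W$; the inductive hypothesis for $\varphi'$ at the vector $k^\W$ then closes the case.

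For $\nec{\act}\varphi'$ I would establish both implications through the action lemma: unfolding the semantics, $\Gamma\models^w\nec{\act}\varphi'$ asserts that $\act^V(w^V)\subseteq(\varphi')^{(V,N)}$ for all $(V,N)\models\Gamma$, whereas the generated-model side asserts $\act^{W^\Gamma}(w)\subseteq(\varphi')^{(W^\Gamma,M^\Gamma)}$. By the lemma every element of $\act^V(w^V)$ has the form $h(v)=v^V$ for some $v\in\act^{W^\Gamma}(w)$, and each such membership is controlled by $\Gamma\models^v\varphi'$, which the inductive hypothesis converts into $v\in(\varphi')^{(W^\Gamma,M^\Gamma)}$; tracing this equivalence through each direction yields the two implications of the case. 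Finally, for $\store{z}\varphi_1$ I would apply the satisfaction condition for substitutions (Lemma~\ref{lemma:subst}) to the substitution sending $z$ to the current state $w$: storing $z$ as $w$ agrees with the reduct along this substitution, so both $\Gamma\models^w\store{z}\varphi_1$ and $(W^\Gamma,M^\Gamma)\models^w\store{z}\varphi_1$ are equivalent to the corresponding assertions about $\varphi_1[z:=w]$, a basic sentence over $\Delta_\W$ of smaller height to which the inductive hypothesis applies. This last case is exactly where carrying the induction over $\Delta_\W$ rather than $\Delta$ pays off.
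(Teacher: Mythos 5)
Your proposal is correct, and its skeleton coincides with the paper's: a structural induction on basic sentences in which the propositional, conjunction, retrieve, and store cases are handled exactly as in the paper (store via the substitution $z\mapsto w$ and Lemma~\ref{lemma:subst}; retrieve via the ground equation in $E_\W$ identifying $k$ with the constant naming $k^\W$). The one genuinely different ingredient is your treatment of necessity. The paper runs a nested induction on the action $\act$ at the level of the entailment relations themselves, using the equivalences $\nec{\act_1\comp\act_2}\varphi\equiv\nec{\act_1}\nec{\act_2}\varphi$ and $\nec{\act_1\cup\act_2}\varphi\equiv\nec{\act_1}\varphi\wedge\nec{\act_2}\varphi$ together with the decomposition of $\nec{\act^*}\varphi$ into the family $\nec{\act^n}\varphi$, so that everything bottoms out at base actions $u,q$, where the diagram gives $\Gamma\models^w\nec{u}\varphi$ iff $\Gamma\models^v\varphi$ with $v=u^\W(w)$. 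You instead isolate the model-theoretic content into the commutation lemma $h\bigl(\act^{W^\Gamma}(w)\bigr)=\act^V\bigl(h(w)\bigr)$ for the diagram-induced homomorphism $h$ into an arbitrary model of $\Gamma$ (proved by the same induction on actions) and then dispatch $\nec{\act}$ in one step. The two arguments have identical content — your lemma is the semantic counterpart of the paper's modality decomposition — but yours buys a cleaner separation of concerns and a reusable statement about actions, while the paper's version keeps homomorphisms entirely out of Theorem~\ref{th:basic}, deferring them to Corollary~\ref{cor:basic} and Theorem~\ref{th:init}; as you note, injectivity is not actually needed here. Your preliminary strengthening of the statement so that $\varphi$ ranges over $\Sen_b(\Delta_\W)$ is also well taken: the paper writes $\Sen_b^\HDQL(\Delta)$ in the statement, yet its own store case produces $\theta(\varphi_1)\in\Sen_b(\Delta_\W)$ and Corollary~\ref{cor:basic} applies the theorem to $\Gamma\subseteq\Sen_b(\Delta_\W)$, so the induction must indeed be carried over the enlarged signature, exactly as you do.
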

\begin{proof}
We proceed by induction on the structure of $\varphi$:
\begin{proofcases}
\item[$p\in\Prop$]  Straightforward, by the definition of $( W^\Gamma,M^\Gamma)$.
\item[$\varphi_1\wedge\varphi_2$] 
$\Gamma\models^w \varphi_1\wedge \varphi_2$ iff
$\Gamma\models^w \varphi_1$ and $\Gamma\models^w \varphi_2$ iff (by induction hypothesis)
$( W^\Gamma,M^\Gamma)\models^w\varphi_1$ and $( W^\Gamma,M^\Gamma)\models^w\varphi_2$ iff
$( W^\Gamma,M^\Gamma)\models^w\varphi_1\wedge \varphi_2$.
\item [$\at{k}\varphi$]
$\Gamma\models^w\at{k}\varphi$ iff  
$\Gamma\models^v\varphi$, where $v=k^\W$ iff (by induction hypothesis)
$(W^\Gamma,M^\Gamma)\models^v\varphi$ iff (by semantics)
$(W^\Gamma,M^\Gamma)\models^w\at{k}\varphi$.
\item [$\nec{\act}\varphi$] 
We proceed by induction on the structure of the actions $\act$:
\begin{proofcases}
\item [$u\in U$]
$\Gamma\models^w \nec{u}\varphi$ iff
$\Gamma\models^v\varphi$, where $v=u^\W(w)$ iff (by induction hypothesis)
$( W^\Gamma,M^\Gamma)\models^v\varphi$ iff
$( W^\Gamma,M^\Gamma)\models^w\nec{u}\varphi$.
\item [$q\in Q$] This case is similar to the one above.
\item [$\act_1\comp\act_2$]
By the induction hypothesis applied to $\act_2$, we have
$\Gamma\models^w \nec{\act_2}\varphi$ iff
$( W^\Gamma,M^\Gamma)\models^w \nec{\act_2}\varphi$.
By the induction hypothesis applied to $\act_1$, we obtain
$\Gamma\models^w \nec{\act_1}\nec{\act_2}\varphi$ iff
$( W^\Gamma,M^\Gamma)\models^w \nec{\act_1}\nec{\act_2}\varphi$.
Hence, 
$\Gamma\models^w \nec{\act_1\comp\act_2}\varphi$ iff 
$( W^\Gamma,M^\Gamma)\models^w\nec{\act_1\comp\act_2}\varphi$.
\item [$\act_1\cup\act_2$]
$\Gamma\models^w \nec{\act_1\cup\act_2}\varphi$ iff 
$\Gamma\models^w \nec{\act_1}\varphi$ and 
$\Gamma\models^w \nec{\act_2}\varphi$ iff (by induction hypothesis)
$( W^\Gamma,M^\Gamma)\models^w\nec{\act_1}\varphi$ and
$( W^\Gamma,M^\Gamma)\models^w\nec{\act_2}\varphi$ iff
$( W^\Gamma,M^\Gamma)\models^w\nec{\act_1\cup \act_2}\varphi$.
\item [$\act^*$]
$\Gamma\models^w \nec{\act^*}\varphi$ iff 
$\Gamma\models^w \nec{\act^n}\varphi$ for all $n\in \N$ iff (by the induction hypothesis)
$( W^\Gamma,M^\Gamma)\models^w\nec{\act^n}\varphi$ for all $n\in\N$ iff 
$( W^\Gamma,M^\Gamma)\models^w\nec{\act^*}\varphi$.
\end{proofcases}
\item[$\store{z}\varphi_1$] 
$\Gamma\models^w\store{z}\varphi_1$ iff $\Gamma\models^w\theta(\varphi_1)$, where $\theta:\{z\}\to T_{\Sigma,\vc}$ is defined by $\theta(z)=w$ iff (by induction hypothesis) 
$(W^\Gamma,M^\Gamma)\models^w\theta(\varphi_1)$ iff (by semantics) 
$(W^\Gamma,M^\Gamma)\models^w\store{z}\varphi_1$.
\end{proofcases}
\psqed\end{proof}
Theorem~\ref{th:basic} holds even if $\Gamma$ is not a set of basic sentences.
\begin{corollary}\label{cor:basic}
If $\Gamma$ is a set of basic sentences then $( W^\Gamma,M^{\Gamma})$ is the initial model of $\Gamma$.
\end{corollary}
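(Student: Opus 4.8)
The plan is to establish the two defining properties of an initial object directly: first that $(W^\Gamma,M^\Gamma)$ is a model of $\Gamma$, and then that it admits a unique homomorphism into every other model of $\Gamma$. Since $\Gamma$ consists of basic sentences, Theorem~\ref{th:basic} is available for every $\gamma\in\Gamma$, and this is what makes both properties fall out with little extra work.

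For model-hood I would fix $\gamma\in\Gamma$ and argue that $(W^\Gamma,M^\Gamma)\models\gamma$, i.e. that $\gamma^{(W^\Gamma,M^\Gamma)}=W^\Gamma_\vc$. By Theorem~\ref{th:basic}, for every $w\in\W_\vc$ we have $(W^\Gamma,M^\Gamma)\models^w\gamma$ iff $\Gamma\models^w\gamma$; and $\Gamma\models^w\gamma$ holds trivially, because $\gamma\in\Gamma$ forces $\gamma^{(V,N)}=V_\vc$ in every model $(V,N)\models\Gamma$, whence $w^V\in\gamma^{(V,N)}$ and so $\Gamma\models\at{w}\gamma$. Ranging over all $w$ gives $(W^\Gamma,M^\Gamma)\models\gamma$, and hence $(W^\Gamma,M^\Gamma)\models\Gamma$.

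For initiality I would exploit the isomorphism $\Mod(\Sigma_\W,E_\W)\cong\W/\Mod(\Sigma)$ recorded just before Definition~\ref{def:init}: since $W^\Gamma=\W_\W$ corresponds to $\mathrm{id}_\W$, which is initial in the comma category, $W^\Gamma$ is initial in $\Mod(\Sigma_\W,E_\W)$. Thus for any model $(V,N)\models\Gamma$ there is a unique first-order $\Sigma_\W$-homomorphism $h\colon W^\Gamma\to V$, and because every element of $W^\Gamma=\W_\W$ is named by a constant, $h$ is forced to send each $w$ to $w^V$. It remains to upgrade $h$ to a $\HDQL$-homomorphism and to transport uniqueness. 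Uniqueness is immediate, since any quantum-model homomorphism $(W^\Gamma,M^\Gamma)\to(V,N)$ is in particular a $\Sigma_\W$-homomorphism and hence equals $h$. For the homomorphism conditions, the scalar part $h_\cn$ is the identity by the rigidity of the complex sort (the Hilbert-space axioms together with $E_\W$ pin the scalars down), and the labelling condition $M^\Gamma_w\subseteq N_{h(w)}$ reduces to the definition of $M^\Gamma$: if $p\in M^\Gamma_w$ then $\Gamma\models^w p$, so $(V,N)\models\at{w}p$, i.e. $p\in N_{w^V}=N_{h(w)}$.

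The main obstacle is this lifting step, namely checking that the first-order initial homomorphism out of $\W_\W$ is genuinely a morphism of quantum models rather than merely of $\Sigma_\W$-structures. Everything hinges on the labelling condition $M^\Gamma_w\subseteq N_{h(w)}$, and it is precisely the definition of $M^\Gamma$ through the semantic entailment $\Gamma\models^w p$ that makes this hold uniformly across all models of $\Gamma$; without basicness of $\Gamma$ one could not invoke Theorem~\ref{th:basic} to reconcile $M^\Gamma$ with the labels actually validated by $(W^\Gamma,M^\Gamma)$. I would also take care over the scalar sort, confirming that $h_\cn$ is forced to be the identity so that $h$ meets condition (a) in the definition of a $\HDQL$-homomorphism.
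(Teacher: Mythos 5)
Your proof is correct and takes essentially the same route as the paper's: satisfaction of $\Gamma$ via Theorem~\ref{th:basic}, then the unique first-order homomorphism out of $W^\Gamma=\W_\W$ (initial for the positive diagram $(\Sigma_\W,E_\W)$, via the comma-category correspondence) upgraded to a quantum-model homomorphism by verifying the labelling condition $M^\Gamma_w\subseteq N_{h(w)}$ from the definition of $M^\Gamma$, with uniqueness inherited from the first-order level. The only difference is that you make explicit the condition that $h_\cn$ be the identity, which the paper's proof leaves implicit; your treatment is, if anything, slightly more careful than the original.
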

\begin{proof}
Since $\Gamma\subseteq \Sen_b(\Delta_\W)$, by Theorem~\ref{th:basic}, $(W^\Gamma,M^\Gamma)\models\Gamma$.
For any quantum model $(V,N)$ which satisfies $\Gamma$, since $W^\Gamma=\W_\W$ is the initial first-order model of the theory $(\Sigma_\W,E_\W)$, there exists a unique first-order homomorphism $h:W^\Gamma\to V$.
By Theorem~\ref{th:basic}, $p\in M^\Gamma_w$ implies $p\in N_{h(w)}$, for all propositional symbols $p\in \Prop$ and all vectors $w\in W^\Gamma_\vc$. 
Therefore, $h:W^\Gamma\to V$ defines a homomorphism $h:(W^\Gamma,M^\Gamma)\to (V,N)$ of quantum models.
Since $h:W^\Gamma\to V$ is unique, $h:(W^\Gamma,M^\Gamma)\to (V,N)$ is unique too.
\end{proof}

\section{Upper level}
There are sets of Horn clauses which are not satisfiable. 
Assume that $\W$ has at least one vector $w\in \W_\vc$ different from $0^\W$.
Then take, for example, $\Gamma=\{p,\sim p\}$, where $p$ is any propositional symbol.
Suppose towards a contradiction that $(V,M)\models\Gamma$ for some quantum model $(V,M)$ over $\Delta_\W=(\Sigma_W,E_\W,\Prop)$.
Since $V\models^\FOL E_\W$, by Lemma~\ref{lemma:inj}, there exists an injective homomorphism $h:W^\Gamma\to V$, which implies that $h(w)\neq 0^V$.
Since  $(V,M)\models\Gamma$, we have $(V,M)\models^{h(w)} p$ and $(V,M)\models^{h(w)} \sim p$, which implies $h(w)\in p^{(V,M)} \cap (p^{(V,M)})^\perp$.
Since $p^{(V,M)} \cap (p^{(V,M)})^\perp=0^V$, we obtain $h(w)=0^V$, which is a contradiction.
Therefore, special attention is needed when using quantum negation.
Despite this fact, we show that any consistent theory in $\Sen_h(\Delta_\W)$ has an initial model.
\begin{theorem} \label{th:init} 
Let $\Gamma\subseteq \Sen(\Delta_\W)$ be any satisfiable set of sentences.
Then: 
\begin{center}
$( W^\Gamma,M^\Gamma)\models^w \gamma$ if $\Gamma\models^w\gamma$ 
\end{center}
for all Horn clauses $\gamma\in\Sen_h(\Delta_\W)$ and all vectors $w\in\W_\vc$.
\end{theorem}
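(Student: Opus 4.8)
The plan is to prove Theorem~\ref{th:init} by structural induction on the Horn clause $\gamma$, establishing the implication $\Gamma\models^w\gamma \Rightarrow (W^\Gamma,M^\Gamma)\models^w\gamma$ for all vectors $w$ simultaneously. The base of the argument is Theorem~\ref{th:basic}, which already gives the desired conclusion (in fact an equivalence) for basic sentences $\varphi$, and hence for the antecedents $\varphi$ appearing in the implicational clauses $\varphi\Rightarrow\gamma$ and for the $\sim\varphi$ case where $\varphi$ is basic. The satisfiability hypothesis on $\Gamma$ is essential and will be used precisely where quantum negation enters, since---as the counterexample $\{p,\sim p\}$ preceding the theorem shows---the statement fails outright for inconsistent $\Gamma$.

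First I would dispatch the structural cases that commute straightforwardly with the construction of $(W^\Gamma,M^\Gamma)$. The propositional case $p$ is immediate from Definition~\ref{def:init}. The operators $\at{k}{\cdot}$, $\nec{\act}{\cdot}$, $\store{z}{\cdot}$, and conjunction $\gamma_1\wedge\gamma_2$ are handled exactly as in the proof of Theorem~\ref{th:basic}: each reduces the satisfaction of $\gamma$ at $w$ to satisfaction of a smaller Horn clause at a related vector (or set of vectors), so the induction hypothesis applies after unwinding the semantics. For universal quantification $\Forall{X}\gamma_2$ I would use Lemma~\ref{lemma:subst} (the satisfaction condition for substitutions) to pass to expansions $\Delta_\W[X]$, reducing to the induction hypothesis over the enlarged signature, much as the $\store{z}$ case is treated.

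The genuinely new cases are those involving quantum implication and quantum negation, namely $\rho_1\leadsto\rho_2$ and $\sim\varphi$. For $\rho_1\leadsto\rho_2$ I would exploit Lemma~\ref{lemma:hook}, which characterizes the Sasaki hook: since $\rho_1$ is a closed basic sentence and $\rho_2$ a closed Horn clause, Theorem~\ref{th:Hibert-sat} guarantees their semantics are closed subspaces, and Lemma~\ref{lemma:hook}(2) translates the global satisfaction of $\rho_1\leadsto\rho_2$ into the subspace inclusion $\rho_1^{(W^\Gamma,M^\Gamma)}\subseteq\rho_2^{(W^\Gamma,M^\Gamma)}$. The goal is then to transfer the hypothesis $\Gamma\models\rho_1\leadsto\rho_2$ into this inclusion by combining the induction hypothesis for $\rho_2$ with Theorem~\ref{th:basic} for the closed basic sentence $\rho_1$. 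For the implication $\varphi\Rightarrow\gamma$ I would argue at a fixed state: from $\Gamma\models^w\varphi\Rightarrow\gamma$ and the assumption $(W^\Gamma,M^\Gamma)\models^w\varphi$ (equivalently, by Theorem~\ref{th:basic}, $\Gamma\models^w\varphi$), I would derive $\Gamma\models^w\gamma$ and then invoke the induction hypothesis on $\gamma$.

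\textbf{The main obstacle} I anticipate is the quantum-negation case $\sim\varphi$, where the asymmetry between $\Gamma$ and its initial model is sharpest, and where the one-directional nature of the theorem (unlike the equivalence of Theorem~\ref{th:basic}) becomes unavoidable. Here $\sim\varphi$ holds at $w$ in $(W^\Gamma,M^\Gamma)$ iff $w\perp\varphi^{(W^\Gamma,M^\Gamma)}$, so I must show that orthogonality to the interpretation of $\varphi$ is forced by $\Gamma\models^w\sim\varphi$; the subtlety is that $\varphi^{(W^\Gamma,M^\Gamma)}$ need not be a closed subspace in general (only closed sentences enjoy that), so I expect to rely on the satisfiability of $\Gamma$ together with the injective homomorphism from $(W^\Gamma,M^\Gamma)$ into any model of $\Gamma$ (Lemma~\ref{lemma:inj} and Corollary~\ref{cor:basic}) to reflect the orthogonality condition back from an arbitrary $(V,N)\models\Gamma$ to the initial model. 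Controlling precisely how inner products and orthogonality are preserved under this homomorphism---and ensuring that consistency rules out the degenerate collapse witnessed by $\{p,\sim p\}$---is where the real work lies.
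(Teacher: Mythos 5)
Your proposal follows essentially the same route as the paper's proof: the same structural induction with Theorem~\ref{th:basic} handling the base and structural cases, the implication case argued pointwise exactly as in the paper, Lemma~\ref{lemma:hook} reducing the Sasaki-hook case to a subspace inclusion settled by the induction hypothesis plus Theorem~\ref{th:basic}, and the quantum-negation case resolved precisely as the paper does---by choosing a model of the satisfiable $\Gamma$ and reflecting the orthogonality $\ip{v}{w}=0$ back along the injective, scalar-preserving homomorphism of Lemma~\ref{lemma:inj}. The only slip is citing Corollary~\ref{cor:basic} for that homomorphism (it applies only when $\Gamma$ is basic); what is actually used is that any model $V$ of $\Gamma$ satisfies $E_\W$, so the diagram property $\Mod(\Sigma_\W,E_\W)\cong \W/\Mod(\Sigma)$ yields the first-order homomorphism $W^\Gamma\to V$ directly.
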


\begin{proof}
We proceed by induction on the structure of $\gamma$.
The base case holds by Theorem~\ref{th:basic}.
The cases corresponding to retrieve $@$, store $\downarrow$, necessity $\nec{\act}$ can be proved using ideas from the proof of Theorem~\ref{th:basic}.
We focus on the remaining cases.
\begin{proofcases}
\item[$\sim\varphi$] 
Assume that $\Gamma\models^w\sim\varphi$. 
For all $v\in \varphi^{( W^\Gamma,M^\Gamma)}$ we have:
\begin{proofsteps}{17em}
$\Gamma\models^v\varphi$ & by Theorem~\ref{th:basic}, since $( W^\Gamma,M^\Gamma)\models^v \varphi$\\
$(V,M)\models \Gamma $ for some model $(V,M)$ & since $\Gamma$ is satisfiable\\
$V\models^\FOL  \ip{v}{w}=0 $ & since $\Gamma\models^w\sim\varphi$ and $\Gamma\models^v\varphi$ and $(V,M)\models \Gamma$  \\
there exists $h:W^\Gamma\to V$ injective  & 
by Lemma~\ref{lemma:inj}, since $V\models^\FOL E_\W$\\
$W^\Gamma\models^\FOL \ip{v}{w}=0$ & 
since $V\models^\FOL  \ip{v}{w}=0$ and $h$ is injective
\end{proofsteps}
It follows that $w\in (\varphi^{(W^\Gamma,M^\Gamma)})^\perp$, which means $( W^\Gamma,M^\Gamma)\models^w \sim\varphi$.
\item [$\rho_1\leadsto \rho_2$]
By Lemma~\ref{lemma:hook}~(2), we need to show that $\rho_1^{(W^\Gamma,M^\Gamma)}\subseteq\rho_2^{(W^\Gamma,M^\Gamma)}$.
Assume $w\in \rho_1^{(W^\Gamma,M^\Gamma)}$, which means $(W^\Gamma,M^\Gamma)\models^w \rho_1$.
Since $\rho_1\in\Sen_b(\Delta_\W)$, by Theorem~\ref{th:basic}, we have $\Gamma\models^w\rho_1$.
Since $\Gamma\models^w\rho_1\leadsto \rho_2$, by Lemma~\ref{lemma:hook}~(1), we have $\Gamma\models^w\rho_2$.
Since $\rho_2$ is a Horn clause, by induction hypothesis, $(W^\Gamma,M^\Gamma)\models^w\rho_2$.
Hence, $w\in \rho_2^{(W^\Gamma,M^\Gamma)}$.
\item[$\varphi\Rightarrow\gamma$]
Assume that $\Gamma\models^w\varphi\Rightarrow \gamma$.
\begin{itemize}
\item [] If $(W^\Gamma,M^\Gamma)\models^w\varphi$ then since $\varphi\in\Sen_b(\Delta_\W)$, by Theorem~\ref{th:basic}, $\Gamma\models^w\varphi$. 
Since $\Gamma\models^w\varphi\Rightarrow \gamma$ and $\Gamma\models^w\varphi$, we get $\Gamma\models^w\gamma$.
By induction hypothesis, $(W^\Gamma,M^\Gamma)\models^w\gamma$.
\end{itemize}
Hence, $(W^\Gamma,M^\Gamma)\models^w\varphi\Rightarrow \gamma$.
\item[$\Forall{X}\gamma$] 
Assume that $\Gamma\models^w\Forall{X}\gamma$.
\begin{itemize}
\item[] Let $(V,M^\Gamma)$ be a $\Delta[X]$-expansion of $(W^\Gamma,M^\Gamma)$.
Let $\theta:X\to W^\Gamma$ be the substitution defined by $\theta(x)=x^V$ for all $x\in X$.
By the definition of $\theta$, $(W^\Gamma,M^\Gamma)\red_\theta=(V,M^\Gamma)$.
Since $\Gamma\models^w\Forall{X}\gamma$, we have $\Gamma\models^w\theta(\gamma)$.
By induction hypothesis, $(W^\Gamma,M^\Gamma)\models^w\theta(\gamma)$.
Since $(W^\Gamma,M^\Gamma)\red_\theta=(V,M^\Gamma)$, by the satisfaction condition for substitutions, $(V,M^\Gamma)\models^w \gamma$.
\end{itemize} 
Since $(V,M^\Gamma)$ is an arbitrary $\Delta_\W[X]$-expansion of $(W^\Gamma,M^\Gamma)$, 
we obtain $(W^\Gamma,M^\Gamma)\models^w\Forall{X}\gamma$.
\end{proofcases}
\end{proof}

If we remove quantum negation from the definition of Horn clauses, the satisfiability condition is not needed.
In the absence of quantum negation, all Horn clauses have an initial model as in first-order logic. 

\begin{corollary}\label{cor:init}
If $\Gamma$ is a satisfiable set of Horn clauses then $( W^\Gamma,M^{\Gamma})$ is the initial model of $\Gamma$.
\end{corollary}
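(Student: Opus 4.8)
The plan is to follow the template of the proof of Corollary~\ref{cor:basic}, simply replacing the appeal to Theorem~\ref{th:basic} (which covers basic sentences) by an appeal to Theorem~\ref{th:init} (which covers Horn clauses, under the satisfiability hypothesis). Two things have to be established: that $(W^\Gamma,M^\Gamma)$ is a model of $\Gamma$, and that it is initial among the models of $\Gamma$.

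For the first part I would fix an arbitrary $\gamma\in\Gamma$. Since every model of $\Gamma$ globally satisfies $\gamma$, we have $\Gamma\models\gamma$, and therefore $\Gamma\models\at{w}\gamma$, that is $\Gamma\models^w\gamma$, for every vector $w\in\W_\vc$. As $\gamma$ is a Horn clause and $\Gamma$ is satisfiable, Theorem~\ref{th:init} then gives $(W^\Gamma,M^\Gamma)\models^w\gamma$ for every such $w$, whence $\gamma^{(W^\Gamma,M^\Gamma)}=W^\Gamma_\vc$ and $(W^\Gamma,M^\Gamma)\models\gamma$. Letting $\gamma$ range over $\Gamma$ yields $(W^\Gamma,M^\Gamma)\models\Gamma$. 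This is the only place where satisfiability enters, and it uses exactly the one-directional implication supplied by Theorem~\ref{th:init}.

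For initiality I would take an arbitrary quantum model $(V,N)$ with $(V,N)\models\Gamma$. Because $V\models^\FOL E_\W$ and $W^\Gamma=\W_\W$ is the initial first-order model of $(\Sigma_\W,E_\W)$, there is a unique first-order homomorphism $h\colon W^\Gamma\to V$. It then remains to verify that $h$ is a homomorphism of quantum models, i.e.\ $M^\Gamma_w\subseteq N_{h(w)}$ for all $w$: if $p\in M^\Gamma_w$ then $\Gamma\models^w p$ by the definition of $M^\Gamma$, hence $(V,N)\models\at{w}p$; since $h$ preserves the constant $w$ we have $h(w)=w^V\in p^{(V,N)}$, i.e.\ $p\in N_{h(w)}$. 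Uniqueness of $h\colon(W^\Gamma,M^\Gamma)\to(V,N)$ is inherited from uniqueness of the underlying first-order homomorphism, so $(W^\Gamma,M^\Gamma)$ is initial.

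I do not expect a genuine obstacle at this level: all the real difficulty --- in particular the interaction of quantum negation with initiality, which is what forces the satisfiability hypothesis and limits Theorem~\ref{th:init} to a single implication --- has already been dealt with inside Theorem~\ref{th:init}. The one point worth a moment's care is checking that the forward implication of that theorem by itself is enough to conclude $(W^\Gamma,M^\Gamma)\models\Gamma$; it is, precisely because the membership $\gamma\in\Gamma$ delivers the hypothesis $\Gamma\models^w\gamma$ for free.
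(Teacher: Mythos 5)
Your proof is correct and follows essentially the same route as the paper, which leaves this corollary's proof implicit as the exact analogue of Corollary~\ref{cor:basic}: global satisfaction of $\Gamma$ by $(W^\Gamma,M^\Gamma)$ via the forward implication of Theorem~\ref{th:init} (using that $\gamma\in\Gamma$ gives $\Gamma\models^w\gamma$ for every $w$), and initiality from the uniqueness of the first-order homomorphism out of $\W_\W$ together with the definition of $M^\Gamma$. Your observation that only the one-directional implication of Theorem~\ref{th:init} is needed, and that satisfiability enters only there, matches the paper's intent exactly.
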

A query is a sentence of the form $\Exists{X} \wedge E$, where $X$ is a set of vector variables and $E\subseteq \Sen_b(\Delta_\W[X])$ is a finite set of basic sentences over $\Delta_\W[X]$.
The following result states that the (semantic) satisfaction of a query by a program understood here as a set of Horn clauses is equivalent to finding a (syntactic) \emph{substitution} for that query.

\begin{theorem}[Herbrand's Theorem] \label{th:LP}
For all satisfiable sets of Horn clauses $\Gamma$, all queries $\Exists{X}\wedge E$ and all vectors $w\in W^\Gamma_\vc$, the following are equivalent:
\begin{enumerate}[leftmargin=*]
\item~\label{th:LP1}$\Gamma\models^w\Exists{X}\wedge E $,
\item~\label{th:LP2}$(W^\Gamma,M^\Gamma)\models^w\Exists{X}\wedge E$, and
\item~\label{th:LP3}$\Gamma \models^w \Forall{Y}\wedge \theta(E)$ for some substitution $\theta:X\to T_{(\Sigma_\W)}(Y)$,
\end{enumerate}
where $(W^\Gamma,M^\Gamma)$ is the model described in Definition~\ref{def:init}.
\end{theorem}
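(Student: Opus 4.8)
The plan is to prove the three statements equivalent by establishing the cycle $(\ref{th:LP1}) \Rightarrow (\ref{th:LP2}) \Rightarrow (\ref{th:LP3}) \Rightarrow (\ref{th:LP1})$. Throughout I would use the semantics of the existential quantifier, dual to the clause given for $\forall$, namely $(\Exists{X}\gamma)^{(W,M)} = \bigcup_{W'\red_\Sigma = W}\gamma^{(W',M)}$. The implication $(\ref{th:LP1}) \Rightarrow (\ref{th:LP2})$ is then immediate: since $\Gamma$ is a satisfiable set of Horn clauses, Corollary~\ref{cor:init} gives that $(W^\Gamma,M^\Gamma)$ is the initial model of $\Gamma$, and in particular $(W^\Gamma,M^\Gamma)\models\Gamma$; as the constant $w\in D_\W$ is interpreted as itself in $W^\Gamma=\W_\W$, any semantic consequence of $\Gamma$ at $w$ holds in $(W^\Gamma,M^\Gamma)$ at $w$.

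For $(\ref{th:LP2}) \Rightarrow (\ref{th:LP3})$ I would unfold the existential: from $(W^\Gamma,M^\Gamma)\models^w\Exists{X}\wedge E$ there is a $\Delta_\W[X]$-expansion $(V,M^\Gamma)$ of $(W^\Gamma,M^\Gamma)$ with $(V,M^\Gamma)\models^w\wedge E$. The crucial observation is that each witness $x^V$ lies in $V_\vc = W^\Gamma_\vc = \W_\vc = D_\W$, hence is named by a constant of $\Sigma_\W$; this lets me define a \emph{ground} substitution $\theta\colon X\to T_{\Sigma_\W}$ by taking $\theta(x)$ to be the constant denoting $x^V$, so that $(W^\Gamma,M^\Gamma)\red_\theta = (V,M^\Gamma)$. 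By the satisfaction condition for substitutions (Lemma~\ref{lemma:subst}) we then get $(W^\Gamma,M^\Gamma)\models^w\theta(\wedge E) = \wedge\theta(E)$, and since $\theta(E)$ is a finite set of basic sentences over $\Delta_\W$ (substitution preserves the basic fragment), Theorem~\ref{th:basic} upgrades this to $\Gamma\models^w\wedge\theta(E)$. Taking $Y=\emptyset$, so that $\Forall{Y}\wedge\theta(E)$ coincides with $\wedge\theta(E)$, yields $(\ref{th:LP3})$.

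For $(\ref{th:LP3}) \Rightarrow (\ref{th:LP1})$ I would fix an arbitrary model $(V,N)\models\Gamma$ and show $w^V\in(\Exists{X}\wedge E)^{(V,N)}$. From $(\ref{th:LP3})$ we have $w^V\in(\Forall{Y}\wedge\theta(E))^{(V,N)}$, so every $\Sigma_\W[Y]$-expansion $V_Y$ of $V$ satisfies $(V_Y,N)\models^{w^V}\wedge\theta(E) = \theta(\wedge E)$; choosing one such $V_Y$ (interpreting $Y$ arbitrarily) and applying Lemma~\ref{lemma:subst} gives $(V_Y,N)\red_\theta\models^{w^V}\wedge E$. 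Since $(V_Y,N)\red_\theta$ is a $\Delta_\W[X]$-expansion of $(V,N)$, the witness clause for $\exists$ is met, so $(V,N)\models^{w^V}\Exists{X}\wedge E$; as $(V,N)$ was an arbitrary model of $\Gamma$, this is exactly $\Gamma\models^w\Exists{X}\wedge E$.

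The main obstacle, and the conceptual heart of the theorem, is the step $(\ref{th:LP2}) \Rightarrow (\ref{th:LP3})$, where a purely \emph{semantic} witness --- a vector realizing the existential in the initial model --- must be converted into a \emph{syntactic} substitution. This passage is legitimate precisely because the diagram construction sets $D_\W = \W_\vc$, so that every vector of $\W$ is already named by a constant and the witness can be written as a ground term; without this feature the extracted substitution would need to range over $T_{\Sigma_\W}(Y)$ with $Y\neq\emptyset$, and one would have to recover it by proof-theoretic means. The remaining work is routine bookkeeping with the reduct functors $\red_\theta$ and verifying that $\theta(E)$ stays within the basic fragment so that Theorem~\ref{th:basic} applies.
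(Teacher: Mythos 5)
Your proof is correct and takes essentially the same route as the paper: the same cycle $(\ref{th:LP1})\Rightarrow(\ref{th:LP2})\Rightarrow(\ref{th:LP3})\Rightarrow(\ref{th:LP1})$, using Corollary~\ref{cor:init} for the first implication, the ground substitution $\theta(x)=x^V$ (legitimate precisely because $D_\W=\W_\vc$ names every vector) combined with Lemma~\ref{lemma:subst} and Theorem~\ref{th:basic} for the second, and the expansion/reduct argument via Lemma~\ref{lemma:subst} for the third. Your explicit observations that one may take $Y=\emptyset$ and that the diagram construction is what converts the semantic witness into a syntactic one are left implicit in the paper but match its intent exactly.
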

\begin{proof}
By Corollary~\ref{cor:init}, $(W^\Gamma,M^\Gamma)$ is the initial model of $\Gamma$. 
\begin{proofcases}
\item [\ref{th:LP1} $\Rightarrow$ \ref{th:LP2}]
By the fact that $(W^\Gamma,M^\Gamma)\models \Gamma$.
\item [\ref{th:LP2} $\Rightarrow$ \ref{th:LP3}] 
Assume that $(W^\Gamma,M^\Gamma)\models^w \Exists{X}\wedge E$. 
\begin{proofsteps}{22em}
$(V,M^\Gamma)\models^w E$ for some quantum model \newline $(V,M^\Gamma)\in|\Mod(\Delta_\W[X])|$ such that $V\red_{(\Sigma_\W)}=W^\Gamma$ & by semantics \\
let $\theta:X\to T_{\Sigma_\W}$ defined by \rlap{$\theta(x)=x^V$ for all $x\in X$ } & \\
$(W^\Gamma,M^\Gamma)\red_\theta=(V,M^\Gamma)$ & as $(W^\Gamma\red_\theta)_x=\theta(x)^{(W^\Gamma)}= x^V$ for all $x\in X$ \\
$(W^\Gamma,M^\Gamma)\models^w \theta(E)$ & since $(V,M^\Gamma)\models^w E$ \\
$\Gamma\models^w \theta(E)$ & by Theorem~\ref{th:basic} 
 \end{proofsteps}
\item [\ref{th:LP3} $\Rightarrow$ \ref{th:LP1}]
Assume that $\Gamma\models^w\Forall{Y}\theta(E)$ for some substitution $\theta:X\to T_{(\Sigma_W)}(Y)$.
Let $(W,M)$ be a model of $\Gamma$.
We show that $(W,M)\models^w\Exists{X} \wedge E$.
Let $(V,M)$ be any expansion of $(W,M)$ to the signature $\Delta_\W[Y]$.
We have that $(V,M)\models\Gamma$ and since $\Gamma\models^w\Forall{Y}\theta(E)$, 
we get $(V,M)\models^w\theta(E)$.
By the satisfaction condition for substitutions, $(V,M)\red_\theta\models^w E$. 
Since $(V,M)\red_\theta\red_{\Delta_\W}=(W,M)$, we have $(W,M)\models^w\Exists{X} \wedge E$.
Since $(W,M)$ is an arbitrary quantum model of $\Gamma$, 
we obtain $\Gamma\models^w\Exists{X} \wedge E$.
\end{proofcases}
\end{proof}
\section{Conclusions}
We have presented a new hybrid-dynamic logic for the specification and analysis of quantum systems.
From a semantic perspective, the present logical framework allows one to capture quantum systems as Kripke structures whose states 
\begin{enumerate*}[label=(\alph*)]
\item are the vectors of a Hilbert space, and
\item are labelled with propositional logic models that describe the evolution of systems.
\end{enumerate*}
We gave two benchmark examples of quantum protocols that can be captured in our logical framework: superdense coding and quantum teleportation.
We proposed a notion of Horn clause, and we proved two results of great importance which set the foundation of quantum logic programming: 
the existence of initial models of any satisfiable set of Horn clauses and a variant of Herbrand's Theorem.
The satisfiability requirement distinguishes the present results from previous approaches to initiality.
Due to quantum negation not all sets of Horn clauses are consistent. Initiality and consistency are conflicting concepts, in general. Please recall that in first-order logic, any set of Horn clauses is consistent. We proved that they can coexist in hybrid-dynamic quantum logic under the satisfiability requirement. This is the first paper which studies initiality in quantum computing and lays the foundation for automated reasoning (which belongs to the area of model theory).

The present work builds on the ideas used to set the foundations of logic programming in hybrid logics presented in \cite{DBLP:journals/tcs/Gaina17} and is related to the developments in the initial semantics for hybridized institutions reported in~\cite{Diaconescu16}. 
However, the hybrid-dynamic quantum logic defined in the present contribution has a number of features that take it outside the framework proposed in \cite{DBLP:journals/tcs/Gaina17,Diaconescu16}:
\begin{enumerate*}[label=(\alph*)]
\item The set of states is no longer plain, but it has a complex algebraic structure given by a Hilbert space.
\item Necessity is defined over structured actions constructed from projective measurements and unitary transformations not just plain binary relations.
\item Sentence building operators include quantum negation $\sim$ and quantum implication $\leadsto$.
\end{enumerate*}
Compared to~\cite{Diaconescu16}, the initiality results developed here are based on inductive arguments that avoid the heavier model-theoretic infrastructure of quasi-varieties and inclusion systems.
At this stage, it is not clear if the class of quantum models over a given signature is a quasi-variety but it is an interesting direction of research for the future.
The present contribution is also related to the developments in the foundations of logic programming at an abstract level from~\cite{Diaconescu04,TutuF15,TutuF17}. 
However, the category-based framework developed there is not equipped with features of modal logics.

The method of diagrams proposed by Robinson in model theory is used to describe Hilbert spaces.
In applications, this approach corresponds to developing a library of predefined data types for scalars and vectors and the operations on them.
Such a library of predefined data types is a necessary feature for a quantum programming language.
When writing a quantum program, the engineers focus on describing the dynamics of quantum systems by relying on such features not necessarily thinking of developing them further.
Similarly, we propose a quantum logic for specifying the dynamics of quantum systems by relying on the existing knowledge of linear algebra.
 
An important task is to study Birkhoff completeness and resolution-based inference rules for hybrid-dynamic quantum logic along the lines of \cite{DBLP:conf/tableaux/GainaT19,DBLP:journals/fac/Gaina17}.
Recent studies show that classical computing can aid unreliable quantum processors of small and intermediate size to solve large problems reliably \cite{osti_10084839}.
In the future, we are planning to equip our hybrid-dynamic quantum logic with features that support the description of hybrid programs.
We will study logical properties that involve all sentences not only Horn clauses following, for example, ideas used to prove omitting types theorem and interpolation from \cite{DBLP:journals/apal/GainaBK23} and \cite{DBLP:conf/aiml/BadiaKG22}, respectively. 

\bibliographystyle{acm}
\bibliography{init}
\end{document}